\documentclass[a4paper,UKenglish,cleveref, autoref, thm-restate]{lipics-v2021}

\hideLIPIcs  

\graphicspath{{./figures/}}
\usepackage{booktabs}
\usepackage{tikz}
\title{Belief-Aware Pivotal Mechanism for DAO Committees} 

\author{Nuno Braz}{INESC-ID, Instituto Superior Técnico, Universidade de Lisboa\and \url{https://nunobrazz.github.io/} }{}{https://orcid.org/0009-0003-1151-3696}{}

\author{Diogo Poças}{Instituto de Telecomunicações, Instituto Superior Técnico, Universidade de Lisboa\and \url{https://diogopocas1991.gitlab.io/} }{}{https://orcid.org/0000-0001-5193-2260}{}

\author{Miguel Correia}{INESC-ID, Instituto Superior Técnico, Universidade de Lisboa}{miguel.p.correia@tecnico.ulisboa.pt}{https://orcid.org/0000-0001-7873-5531}{}

\authorrunning{N. Braz, D. Poças, M. P. Correia} 

\Copyright{Nuno Braz, Diogo Poças, and Miguel Correia} 

\ccsdesc[500]{Theory of computation~Algorithmic mechanism design}
\ccsdesc[300]{Applied computing~Electronic commerce}

\keywords{Mechanism Design, Decentralized Autonomous Organizations, Committees, Information Aggregation}

\category{} 

\relatedversion{} 

\acknowledgements{This work was financially supported by Project Blockchain.PT – Decentralize Portugal with Blockchain Agenda (Project no 51), WP 6: Digital Assets Management, Call no 02/C05-i01.01/2022, funded by the Portuguese Recovery and Resilience Program (PRR), The Portuguese Republic, and The European Union (EU) under the framework of the Next Generation EU Program. This work was also supported by national funds through Fundação para a Ciência e a Tecnologia, I.P. (FCT) under projects UID/50008/2025 (DOI:~\url{https://doi.org/10.54499/UID/50008/2025}) and UID/50021/2025 (DOI:~\url{https://doi.org/10.54499/UID/50021/2025}).}

\nolinenumbers 

\makeatletter
\let\@oddfoot\@empty
\makeatother

\EventEditors{John Q. Open and Joan R. Access}
\EventNoEds{2}
\EventLongTitle{42nd Conference on Very Important Topics (CVIT 2016)}
\EventShortTitle{CVIT 2016}
\EventAcronym{CVIT}
\EventYear{2016}
\EventDate{December 24--27, 2016}
\EventLocation{Little Whinging, United Kingdom}
\EventLogo{}
\SeriesVolume{42}
\ArticleNo{23}

\begin{document}

\maketitle

\begin{abstract}
Decentralized Autonomous Organizations (DAOs) increasingly delegate decisions to small committees whose members hold two independent kinds of private information: idiosyncratic preferences over alternatives (what they want) and beliefs about which alternative best serves the organization (what they know). Members have no reason to reveal what they know unless they are incentivized to do so.

Standard voting rules are designed to extract only what members want. If these two pieces of information are not aligned, the organization can end up making a suboptimal decision. Existing mechanisms for eliciting (extracting) beliefs from experts, such as Decision Scoring Rules and Decision Markets, face well-known impossibility results in multi-agent settings under deterministic decision rules, and ignore the fact that agents can have idiosyncratic preferences over alternatives.

This paper proposes a mechanism for binary committee decisions that augments the pivotal mechanism, an instance of the Groves mechanism, with a reward that depends on the outcome distributed after a boolean signal, indicating success or failure of the outcome, is observed. The mechanism aggregates the experts' private information to maximize the probability that its decision agrees with a weighted linear pooling of the experts' beliefs, framing the designer's problem as classification rather than welfare maximization.

An affine family of reward rules is proposed, and the region of parameters that simultaneously satisfy dominant-strategy incentive compatibility, interim individual rationality, and budget feasibility is derived. In informative-belief environments, the mechanism outperforms majority voting, including majority voting augmented with the same outcome-contingent rewards, especially when the committee's average preferences are biased against the superior alternative.
\end{abstract}

\section{Introduction}
\label{sec:introduction}

A Decentralized Autonomous Organization (DAO) is an organization governed by rules encoded as smart contracts on a blockchain, where members coordinate and make collective decisions without centralized authority. In recent years, DAOs have grown into large-scale entities where thousands of members coordinate pseudonymously \cite{sharma2024future}. Unlike traditional contracts, which merely articulate rules and can only serve as legal proof, smart contracts can execute the rules they encode. 
The scale of these organizations creates a significant challenge in decision-making. Several studies~\cite{cong2025centralized, rikken2019governance, kitzler2024governance} confirm that DAOs suffer from low participation rates and slow execution, showing the need for delegated governance structures. In large blockchains such as Bitcoin and Ethereum, the decision-making is done by the maintainers \cite{bip2, becze2015eip}; Polkadot elects a council \cite{polkadot_web3}; project Catalyst has a Community Advisors board and a veteran \cite{projectcatalyst_docs}. Either way, when governance councils are not elected, decision power is concentrated in a small number of members \cite{kitzler2024governance}. Empirical and analytical studies confirm that token-based voting is insufficient when participants have heterogeneous stakes and information \cite{kitzler2024governance, han2025dao, UnderstandingDAOs, dimitri2023voting}.

Restricting power to a small group or committee amplifies a natural problem: members often have personal interests that are not aligned with the organization's goal. For example, a committee governing a lending DAO might vote to approve a highly risky token as collateral simply because they personally own large amounts of it, exposing users to severe financial risk. Mechanisms such as majority voting and quadratic voting \cite{lalley2018quadratic, Flexible-Design-for-Public-Goods} are designed to account only for what members \emph{want} and not for what they \emph{know} about which alternative benefits the organization. Furthermore, any non-continuous voting mechanism necessarily loses information by restricting experts to discrete reports. Talmon \cite{talmon2023social} identifies DAOs as an important application domain for Multi-Agent Systems and calls for the community to address their governance challenges. Our work addresses this by proposing a decision-making mechanism with formal incentive guarantees that improves collective decisions through better information aggregation.

Organizations often do an evaluation of the consequences of their decisions. These evaluations are usually informal: performance reviews, retrospectives, reputation effects. In a DAO, the evaluation can be formalized: the community can express whether a decision led to a positive or negative outcome after a predetermined period. More quantitative criteria also serve, such as treasury growth, total value locked, user adoption. The evaluation task is simpler than the original decision, as it concerns observed outcomes rather than predictions. Because smart contracts can hold funds in escrow and release them conditionally, members' compensation can naturally be tied to the actual consequences of the decisions they supported.
This allows what Kiayias and Lazos~\cite{kiayias2022sok} call \emph{accountability}: a blockchain governance system satisfies the property of accountability if whenever participants bring in a change, they are held individually responsible for it in a clearly defined way by the platform.

Hanson's futarchy proposal \cite{hanson2013shall} pioneered the idea of using prediction markets to guide policy decisions, leveraging the collective intelligence of the market to evaluate which policy would produce better outcomes. Decision Scoring Rules~\cite{decision-scoring-rules, boutilier2011eliciting} and Decision Markets~\cite{decision-markets, chen2011decision} formalize it by conditioning monetary transfers on the observed outcome, but they are designed for single experts or community-wide market participation. In practice, decision markets have been deployed in DAOs either as the decision mechanism itself (MetaDAO~\cite{metadao}) or as a filtering layer that delegates the final decision to a committee (DAOstack~\cite{daostack_whitepaper}). An analysis on the latter noted that the system became self-referential: the collective intelligence of the market never showed up, and proposals were passed not because they were deemed relevant, but because their authors wished to see them approved \cite{davo2025rise}. One could envision an inverted DAOstack architecture: the committee decides, and the outcome is evaluated ex-post by a market or a peer prediction method such as the one of Miller et al. \cite{miller2005eliciting}. This paper formalizes a new mechanism structure where transfers can be conditioned on both reports and impact of the outcomes, abstracting practical considerations such as the medium of exchange, the committee selection process, and the outcome evaluation procedure, allowing the incentive analysis to apply across diverse instantiations. 


\subsection{Mechanism Overview}

We focus on binary decisions, the most common type in DAOs: approximately 67\% of analyzed DAO proposals use a binary voting pattern~\cite{UnderstandingDAOs}. 

Each committee member---hereafter \emph{expert}---possesses two independent, private types of information: \emph{idiosyncratic preferences}, how much they personally stand to gain or lose from each alternative (what they want), and \emph{subjective beliefs}, their assessment of which alternative will improve the organization's success metric (what they know). The designer wants to decide based on beliefs alone, but experts have no incentive to report their beliefs truthfully---their preferences create an incentive to misrepresent. Rather than trying to elicit (extract) preferences and beliefs separately, our mechanism induces each expert to submit a single continuous report that encodes both, and maximizes the probability that the aggregation of the reports matches the one the designer would make if all beliefs were observable.

Unlike Decision Scoring Rules, which elicit beliefs by directly scoring prediction accuracy, our mechanism ties payments to the realized outcome, so that each expert's belief is revealed through their own expected-utility maximization. Incentive compatibility comes from the pivotal structure, an instance of the Groves Mechanism~\cite{green1979incentives}, in which each agent's payment equals the externality their report imposes on the others. The role of the outcome-contingent rewards is to reshape the content of the true report, folding belief information into the scalar message that the pivotal mechanism would otherwise elicit as pure idiosyncratic preference. When experts do not have idiosyncratic preferences (a standard assumption in Decision Scoring Rules), the mechanism selects the alternative that a designer would select if given access to all private beliefs and aggregated them as a linear opinion pool, with each expert's weight equal to $\frac{b}{1 + p(a - 1)}$, where $p$ is the expert's belief that their preferred alternative will succeed and $a$ and $b$ are positive numbers that can be different between experts. These parameters are the intuitive counterparts of the reward coefficients $(\alpha, \beta)$ derived in Section~\ref{sec:mechanism}.

The mechanism sidesteps the strict-properness impossibilities of decision scoring rules under deterministic decision rules~\cite{decision-markets, decision-scoring-rules} and ensures that each expert's report is strictly optimal whenever they assign non-zero probability to any aggregate of the other agents' reports.

Several mechanisms have been proposed for aggregating preferences in binary decision settings, each with different trade-offs regarding incentive compatibility, information elicitation, and budget requirements. However, most of these mechanisms do not account for the distinction between experts' idiosyncratic preferences and their beliefs about which alternative benefits the organization.
Table~\ref{tab:decision_mechanisms} compares decision-making mechanisms along the following dimensions: whether the mechanism elicits experts' subjective beliefs about the success of each alternative (\emph{Belief Elicitation}); whether it mitigates the influence of experts' private preferences on the collective decision (\emph{Reduce Idiosyncratic Noise}); whether it supports multiple simultaneous participants (\emph{Multi-Agent}) and whether it operates in a single round (\emph{One-Shot}).
In this paper, we construct a mechanism comprising an allocation rule, a pivotal transfer rule, and an outcome-contingent reward rule that depends on a boolean signal observed after the decision is implemented. We formulate a design objective for committee decision-making---maximizing the probability that the mechanism's decision matches a \emph{linear opinion pool} of the experts' private beliefs, framing the designer's problem as classification rather than welfare maximization. We design an affine parameterization of the reward rules and derive restrictions of the feasible region of design parameters that simultaneously satisfy dominant-strategy incentive compatibility, interim individual rationality, and budget feasibility. We derive the dominant strategy, in which each expert's report jointly encodes preferences and beliefs into a single scalar, with the reward parameters controlling how strongly beliefs enter the report and how experts' messages are scaled. Finally, we evaluate the mechanism via Monte Carlo simulations, showing that when experts receive signals correlated with the true success probabilities it outperforms majority voting with outcome-contingent rewards and a DSR applied to a random expert in most scenarios.

\newcolumntype{V}{!{\vrule width 1.8pt}}
\newcommand{\fball}{\tikz[baseline=-0.5ex]\draw[fill=black] (0,0) circle (0.10cm);}
\newcommand{\eball}{\tikz[baseline=-0.5ex]\draw[thick] (0,0) circle (0.10cm);}
\newcommand{\hball}{\tikz[baseline=-0.5ex]{\draw[thick] (0,0) circle (0.10cm); \fill (90:0.10cm) arc (90:270:0.10cm) -- cycle;}}

\begin{table}[h]
\centering
\footnotesize
\renewcommand{\arraystretch}{1.1}
\begin{tabular}{ l c c c c c }
\toprule
\textbf{Feature} & \textbf{This} & \textbf{MV} & \textbf{QV} & \textbf{DSR} & \textbf{DM} \\ \midrule
{Belief Elicitation}         & \fball & \eball & \eball & \fball & \fball \\
{Reduce Idiosyncratic Noise} & \fball & \eball & \eball & \hball\rlap{\,$^{\dagger}$} & \hball\rlap{\,$^{\dagger}$} \\
{Multi-Agent}                & \fball & \fball & \fball & \eball & \fball \\
{One-Shot}                   & \fball & \fball & \fball & \fball & \eball \\
\bottomrule
\end{tabular}
\caption{Comparison of decision-making mechanisms in decentralized organizations. \fball~= satisfied, \hball~= partially satisfied, \eball~= not satisfied. \textsuperscript{$\dagger$}Outcome-contingent subsidies can indirectly mitigate idiosyncratic noise, but these mechanisms were not designed for this purpose.}
\label{tab:decision_mechanisms}
\end{table}

The remainder of the paper is organized as follows: Section~\ref{related-work} reviews related work. Section~\ref{preliminaries} presents the model, the mechanism structure, and the designer's objective. Section~\ref{sec:mechanism} progressively derives the admissible reward rules that satisfy a set of important properties. Section~\ref{sec:parameterization} analyzes the design parameters and presents simulation results. Section~\ref{sec:conclusions} concludes.

\section{Related Work}\label{related-work} 

This paper sits at the intersection of three areas: governance in decentralized autonomous organizations, information aggregation in committees, and outcome-conditioned elicitation mechanisms (decision scoring rules and decision markets). DAO governance has identified the problem we address: committees making decisions where members hold both idiosyncratic preferences and private beliefs about what best serves the organization. We review information aggregation in committees and mechanisms with similar structure to ours.

\paragraph*{Committees}

A foundational result in committee decision-making is the Condorcet Jury Theorem (1785): if each member of a group independently identifies the correct alternative with probability greater than $1/2$, then majority voting selects the correct alternative with probability approaching~$1$ as the group grows \cite{de2014essai}. The theorem assumes identical preferences and sincere voting. Austen-Smith and Banks~\cite{austensmith1996} find that honest voting by all individuals is not rational even when they have common preferences. 
Li et al.~\cite{li1999conflicts} study committee decisions where members have conflicting preferences but share a common interest in making the correct choice, an objective close to ours. In their model, preferences are publicly known while private information (a signal about which alternative is correct) must be elicited through reports. The members' preferences derive from the personal costs of incorrect decisions, and the model does not distinguish between bias manifested in prior beliefs and bias manifested in preferences. They show that manipulation incentives persist under any decision rule---even with continuous reports---so no equilibrium fully aggregates members' information, though some sharing still occurs.
Nitzan and Paroush~\cite{nitzan1982} studied the optimal weighted majority rule for a group of individuals with identical objectives but heterogeneous abilities. Each individual votes for one of two alternatives, identifying the correct one with a known probability that may differ across individuals.

Our mechanism departs from these settings by using transfers to establish a dominant strategy for each agent. The known equilibrium strategy allows the designer to provide formal incentive guarantees and to tune the mechanism's parameters to control how private information enters the aggregate, even without observing individual types.

\paragraph*{Decision Scoring Rules and Decision Markets}

A DSR pairs a scoring function that rewards a single expert based on their report and the actual outcome with a decision rule that selects an action from the report; in decision markets, experts instead produce forecasts conditional on each action. Both were first introduced by~\cite{decision-markets}.
Oesterheld and Conitzer~\cite{decision-scoring-rules} showed that DSRs can only strictly incentivize reporting the expected utility of the recommended action, not the full probability distribution over outcomes. 
Chen et al.~\cite{chen2011decision} showed that in decision markets it is impossible to strictly incentivize truthful reports about the likelihood of achieving some outcomes under deterministic decision rules; in other words, to strictly incentivize truthful reports the decision maker must risk choosing a suboptimal alternative. The authors further study the case where a single expert simply recommends an action, extending the model introduced by \cite{decision-markets}.

Our setting departs from the single-expert focus of decision scoring rules: we aim to follow the action favored by the \emph{aggregate} of the experts' beliefs rather than that of any single expert. Extending DSRs to multiple agents directly introduces strategic manipulation. As in \cite{decision-markets}, we restrict attention to two possible
outcomes, ``good'' and ``bad''. Our mechanism assigns each expert a reward that differs according to whether the realized outcome was good or bad, and charges the pivotal transfer on their report. This makes truthful reporting a dominant strategy, strictly optimal whenever an expert assigns non-zero probability to every possible aggregate of the other experts' reports. As a result, the designer can select the action it would have chosen with direct access to the experts' private beliefs under a linear pooling rule.

Considering experts with their own preferences, \cite{boutilier2011eliciting} addresses self-interested experts by introducing compensation rules that offset the expert's inherent utility from the decision. However, constructing proper compensation rules requires the principal to know the expert's utility function; when this knowledge is unavailable, properness cannot be guaranteed in general. Our mechanism does not assume knowledge of experts' preferences, though a known type distribution lets the designer anticipate its expected accuracy.

\paragraph*{Two-Stage Mechanisms}
The idea of conditioning transfers on post-decision information is not new. Mezzetti~\cite{mezzetti2004mechanism} proposed a two-stage Groves mechanism where agents report realized payoffs after the outcome, and \cite{nath2013strict} strengthened this to strict ex-post incentive compatibility. Our mechanism shares the two-stage structure but departs in two ways: the objective is classification rather than efficiency, and the second stage does not require agents to report their realized payoffs---a single boolean $\Delta \in \{-1,1\}$ is observed, eliminating the need for incentive-compatible second-stage strategies.

\section{Model}\label{preliminaries}
This section presents the model and the necessary background.
We consider a single decision (one-shot setting) in which a DAO faces a choice between two mutually exclusive alternatives $\{A, B\}$. The committee consists of a set $N = \{1, \dots, n\}$ of risk-neutral experts who aim to maximize their expected utility. 
Following the implementation of the alternative, a boolean state of the world, $\Delta \in \{-1, 1\}$ is observed, where $\Delta = 1$ denotes a positive impact and $\Delta = -1$ denotes a negative impact. Only the implemented alternative is realized, so the mechanism observes $\Delta$ for the chosen option alone and never the counterfactual outcome of the discarded one; conditioning transfers on $\Delta$ requires only the realized impact of the chosen alternative to be resolvable to a boolean against a benchmark, not that the two alternatives be compared directly. We take the procedure that produces $\Delta$---a community retrospective, a quantitative KPI threshold such as treasury growth or total value locked, or an arbitration method such as Kleros~\cite{lesaege2019kleros}---as given. Its choice is orthogonal to the incentive analysis, but the guarantees hold only if $\Delta$ is observed honestly (and experts understand how it is determined).

Experts have quasi-linear utility: a payment of~$x$ to the mechanism reduces the expert's utility by exactly~$x$, with no indirect benefit from the collective pool increasing. The medium of exchange is left unspecified---it may be tokens from the DAO treasury, reputation points, or any other transferable unit of value. We assume that transfers are enforceable, e.g. via smart contracts, and that posting collateral does not change the expert's utility function. Experts are risk neutral, thus they value the payment now as much as they will value it later. The rules of the mechanism are common knowledge. We focus on environments where the number of experts is insufficient for traditional decentralized prediction markets to function effectively.

\subsection{Information Structure}

Each expert $i \in N$ has a private type---a collection of private information that fully characterizes the expert---comprising their idiosyncratic preferences and subjective beliefs.

The expert's idiosyncratic utility is captured by the pair $v_i = (v^A_i, v^B_i) \in \mathbb{R}^2$, where $v^A_i$ and $v^B_i$ represent the expert's personal utility derived from the implementation of option A and option B, respectively. Since exactly one of these options will be implemented, the expert can normalize their preference parameters accordingly so without loss of generality, at least one of these alternatives yields a non-negative idiosyncratic utility for the expert. 

The expert's subjective beliefs are represented by the pair $p_i = (p^A_i, p^B_i) \in \mathcal{P}^2$, where $\mathcal{P} = (0, 1)$ and $p^A_i$ and $p^B_i$ denote the expert's private assessment of the probability that alternatives A and B will lead to a positive outcome for the organization, respectively. We exclude the cases where the experts are certain, to avoid unnecessary notation.
They are formally defined as the conditional probabilities of observing a positive outcome given the chosen alternative: $p^A_i = \mathbb{P}_i(\Delta > 0 \mid A)$ and $p^B_i = \mathbb{P}_i(\Delta > 0 \mid B)$.
These are subjective conditionals: they may differ across experts and generally will not coincide with the true success probabilities.

The expert’s type is given by $\theta_i \in \Theta_i$, where $\theta_i = (v^A_i, v^B_i, p^A_i, p^B_i)$. We denote the type profile of all experts’ types by $\mathbf{\theta} = (\theta_1, \dots, \theta_n)$ and the belief profile by $\mathbf{p} = (p_1, \dots, p_n)$ with $p_i = (p^A_i, p^B_i)$. We operate under the Independent Private Values (IPV) model: each expert’s type $\theta_i$ is drawn independently from some distribution over the type space $\Theta_i$.

To clarify this assumption, we contrast our setting with the mineral rights auction \cite{kagel1986winner}. There, each bidder receives a private geological signal, but the true value of the tract is a common function of all signals. Learning another bidder’s signal would cause a bidder to revise their own valuation. A correlated values model is necessary precisely because no public deliberation stage resolves the common-value uncertainty before bids are placed. In DAO governance the mechanism is preceded by forum discussions, temperature checks, and community calls that generate a common public signal observed by all experts. Each expert absorbs this shared information and updates accordingly. We capture what stays private after discussion. In this way, an expert’s private signal carries no information about another expert’s signal: learning it would not cause a further update. 

We do not assume that the two pairs $(v^A_i, v^B_i)$ and $(p^A_i, p^B_i)$ are correlated. Experts tend to benefit from the options they believe will succeed which aligns private incentives with informational contributions and makes the designer’s task easier. Independence is therefore a conservative assumption.

The mechanism takes the committee as given: the selection procedure determines robustness to coordination among experts. The properties rely on independent private values; when experts coordinate before reporting---through shared information, bribery, or vote-buying---this assumption breaks down.

\subsection{Mechanism Structure}
A direct revelation mechanism asks each participant to report their private information and then determines an outcome and monetary transfers as a function of the submitted messages. Formally, it consists of two components: an \emph{allocation rule}, which maps reports to an outcome, and a \emph{transfer rule}, which determines the payment each participant makes or receives based on the reports. We introduce a third component, a \emph{reward rule}, that distributes payments after the outcome is observed. The reward rule is defined separately from the transfer rule because the expert evaluates it only in expectation over the unknown outcome at the time of reporting, while the transfer is known. This separation allows us to restrict the admissible reward rules independently of the allocation and transfer rules.

In our case, the mechanism proceeds in two stages, illustrated in Figure~\ref{fig:mechnaism_overview}. First, each expert submits a message encoding how strongly the expert favors one alternative over the other. The mechanism aggregates these messages, selects an alternative, and charges each expert a transfer that depends only on the submitted messages. Second, after the outcome is observed, the mechanism distributes a reward that depends on whether the decision led to a positive or negative outcome. The expert's payoff is therefore determined by three factors: which option is chosen, how much they pay upfront, and what reward they receive once the outcome is revealed.
\begin{figure}[htbp]
    \centering
    \includegraphics[width=0.8\textwidth, trim=5cm 12cm 0cm 9cm, clip]{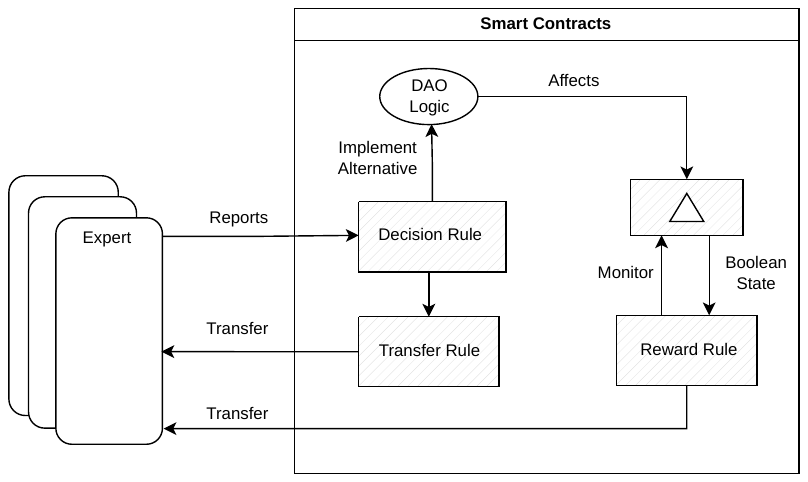}
    \caption{Mechanism Overview}
    \label{fig:mechnaism_overview}
\end{figure}

All experts share a common message space $\mathcal{R}$. The profile of messages submitted by all experts is denoted by $\mathbf{m} = (m_1, \dots, m_n) \in \mathcal{R}^n$. We use the notation $\mathbf{m}_{-i}$ for the profile of all experts except~$i$, and sometimes write $\mathbf{m} = (m_i, \mathbf{m}_{-i})$.

We formally define a mechanism in this setting as a triplet $\mathcal{M} = (x, t, r)$. We use $\mathbf{t},\mathbf{r}$ to refer to vectors denoting the application of each rule:
\begin{itemize}
    \item $x:  \mathcal{R}^n \to \{A, B\}$ is the deterministic allocation rule representing the collective decision.
    \item $\mathbf{t} = (t_1, \dots, t_n)$ is the transfer rule, where $t_i: \mathcal{R}^n\to \mathbb{R}$ represents a monetary transfer to expert $i$ based solely on the submitted messages. A positive transfer means the expert receives money, while a negative transfer means they pay.
    \item $\mathbf{r} = (r_1, \dots, r_n)$ is the reward rule, where $r_i: \mathcal{R}^n \times \{-1, 1\} \to \mathbb{R}$ represents the ex-post reward distributed to expert $i$, contingent on the realized state of the world $\Delta$.
\end{itemize}

Given a type $\theta_i = (v^A_i, v^B_i, p^A_i, p^B_i)$ and a profile of submitted messages $\mathbf{m}$, let $o = x(\mathbf{m}) \in \{A, B\}$ denote the chosen alternative. We write $v^o_i$, $p^o_i$ for the expert's utility and belief associated with the chosen alternative. The expected utility of expert $i$ reflects the three factors described above:
\begin{equation}\label{expert-utility}
    \mathbb{E}[u_i(\mathbf{m} \mid \theta_i)] =
    v^{o}_i  + t_i(\mathbf{m})+ \mathbb{E}_{\Delta}[r_i(\mathbf{m}, \Delta) \mid p^{o}_i]
\end{equation}

\subsection{Background: The Pivotal Mechanism}\label{pivotal}
Our mechanism builds on the Pivotal Mechanism \cite{green1979incentives}. Each expert submits a scalar $m_i \in \mathbb{R}$ representing the intensity of their preference. The allocation rule selects $A$ if the sum of messages is non-negative, otherwise it selects $B$:
\begin{equation}\label{allocation-rule}
x(\mathbf{m}) = \begin{cases} A & \text{if } \sum_{i=1}^n m_i \geq 0 \\ B & \text{otherwise} \end{cases}
\end{equation}
The transfer charges each pivotal expert the externality they impose on others:
\begin{equation} \label{Pivotal}
t_i(\mathbf{m}) = \begin{cases}
-\sum_{j\neq i} m_j & \text{if } -m_i >\sum_{j\neq i} m_j >0\\[6pt]
\sum_{j\neq i} m_j  & \text{if } -m_i<\sum_{j\neq i} m_j  <0 \\[6pt]
0 &\text{otherwise}
\end{cases}
\end{equation}
Non-pivotal experts pay nothing and reporting $m_i = v_i^A - v_i^B$ is a dominant strategy.

In the pivotal mechanism, each expert has a \emph{winning set}: the set of profiles where their preferred alternative is selected. An expert who prefers~$A$ ($m_i > 0$) wins when $\sum_j m_j \geq 0$; submitting a larger positive message pushes the aggregate toward their winning set. An expert who prefers~$B$ ($m_i < 0$) wins when $\sum_j m_j < 0$; submitting a more negative message pushes toward theirs. The \emph{critical value} is the minimum message needed to enter the winning set: $c_i = -\sum_{j \neq i} m_j$. The pivotal transfer~\eqref{Pivotal} charges exactly this value to pivotal experts. By Myerson's lemma~\cite{myerson1981optimal}, a mechanism is DSIC if and only if the allocation is monotone in the message and each agent pays their critical value. DSIC is characterized in terms of the sum of both transfers: $t_i(\mathbf{m}) + \mathbb{E}_\Delta[r_i(\mathbf{m},\Delta)]$.

\subsection{Designer's Objective}\label{designer-objective}
Unlike most mechanisms that seek to maximize the sum of idiosyncratic preferences $v_i$, here these valuations are exactly the idiosyncratic noise the designer wants to filter out. The designer instead wants to select the alternative most likely to yield a positive outcome for the organization, which is given by the experts' beliefs, a quantity no agent wants to report directly. 

The designer's goal is to select the alternative that an informed designer would choose if given direct access to all experts' private beliefs about each alternative's success probability. Let $x^* : \mathcal{P}^{2n} \to \{A,B\}$ denote an ideal decision rule --- the allocation the designer would choose if all beliefs were observable. However, the designer does not observe $p_i$; instead, the mechanism collects messages $m_i$ that encode both $v_i$ and $p_i$. The designer seeks to maximize the probability that the mechanism's decision agrees with the ideal one:
\begin{equation}\label{eq:objective}
\max_{x, t, r} \; \mathbb{P}\!\left(x(\mathbf{m}^*) = x^*(\mathbf{p})\right),
\end{equation}
where $\mathbf{m}^* = (m^*(v_1, p_1), \ldots, m^*(v_n, p_n))$ is the message profile under the dominant strategy $m^*$ induced by $(x, t, r)$, and the probability is over the random types $(v_i, p_i)$.

A \emph{strategy} for agent $i$ is a function $m : \mathbb{R}^2 \times \mathcal{P}^2 \to \mathbb{R}$ mapping the agent's type $(v_i, p_i)$ to a message $m_i = m(v_i, p_i)$. The designer seeks a mechanism in which each agent has a dominant strategy that depends on both components of the type. Since the mechanism's instruments $(x, t, r)$ cannot depend on $p_i$ directly, the dominant strategy must be \emph{implementable}: it must arise from the agent's optimization under rules independent of $p_i$. 

We instantiate the general objective by choosing an ideal decision rule $x^*$ based on \emph{linear pooling}~\cite{winkler2019probability, ranjan2010combining}. Linear pooling is a well-studied method for combining expert forecasts: it averages the individual probability assessments with weights set by the designer. In our setting, this amounts to selecting $A$ whenever the experts' average belief differential favors it:
\begin{equation}\label{ideal-rule}
x^*(\mathbf{p}) = \begin{cases} A & \text{if } \sum_{i=1}^n w^A_ip^A_i - w^B_ip^B_i > 0 \\ B & \text{otherwise} \end{cases}
\end{equation}
An alternative is logarithmic pooling, which gives disproportionate weight to extreme assessments: if one expert assigns 90\% to~$A$ while five others assign 40\%, logarithmic pooling selects~$A$, whereas linear pooling selects~$B$.

Since the designer does not observe $(p^A_i, p^B_i)$, the mechanism must rely on the messages $m_i$. The designer therefore seeks a reward rule $r$ such that the sign of the aggregate message matches the sign of the aggregate belief differential:
\begin{equation}\label{sign-match}
\text{sgn}\left(\sum_{i=1}^{n} m_i\right) = \text{sgn}\left(\sum_{i=1}^{n} w^A_ip^A_i - w^B_ip^B_i\right), \quad \forall \theta \in \Theta
\end{equation}
subject to the constraints defined in the next section.
We consider equal weights throughout the rest of the paper.

\subsection{Formalization of the Properties}\label{Properties}

The desired properties for the decision-making mechanism are formalized below. 

The mechanism must remain invariant to the relabeling of alternatives. Since the reports are reflected by the experts' utility functions we formalize this requirement as:

\newtheorem{property}{Property}

\begin{property}[Symmetry]\label{def:symmetry}
A mechanism satisfies Symmetry if, for every agent $i$, every type $\theta_i = (v^A_i, v^B_i, p^A_i, p^B_i)$ with relabeled type $\tilde{\theta}_i = (v^B_i, v^A_i, p^B_i, p^A_i)$:
\[\mathbb{E}[u_i(\mathbf{m} \mid \theta_i)] = \mathbb{E}[u_i(-\mathbf{m} \mid \tilde{\theta}_i)] \]
where $-\mathbf{m}$ is the profile obtained by negating each message.
\end{property}

Next, we establish the condition under which experts are willing to participate in the DAO's decision-making process, formally known as Individual Rationality (IR). We adopt the notion of interim IR: each expert knows their own preferences and beliefs when deciding whether to vote, but does not know what the other experts will report. Additionally, there also exist the notions of ex-ante (before learning one's type) and ex-post (after all messages are submitted). 

\begin{property}[Interim Individual Rationality]\label{def:IR}
    A decision-making mechanism satisfies satisfies Interim Individual Rationality (IR) with reservation level $\bar{u}$ if:
        \[ \forall i \in N, \forall \theta_i \in \Theta_i, \forall \mathbf{m}_{-i} \in \mathcal{R}_{-i}: \quad \mathbb{E}[u_i(m_i^*, \mathbf{m}_{-i} \mid \theta_i)] \geq \mathbb{E}[u_i(\mathbf{m}_{-i} \mid \theta_i)] + \bar{u} \]
        for some $m_i^* \in \mathcal{R}$.
\end{property}

A committee decision is \emph{non-excludable}: expert~$i$ is bound by the collective outcome whether or not they vote. The relevant outside option is therefore not zero utility but \emph{abstention} --- the expert declines to influence the vote while the others still decide, receiving $\mathbb{E}[u_i(\mathbf{m}_{-i}\mid\theta_i)]$, in which $t_i = 0$, the reward vanishes, and they obtain only their valuation $v_i^{x(\mathbf{m}_{-i})}$ of the outcome chosen by others. Participation is individually rational when the expected gain over this abstention payoff clears a reservation level $\bar{u} \leq 0$, which measures how far below the abstention payoff an expert will still tolerate in order to keep their seat. When exit is costless, an expert participates only if doing so is at least as good as abstaining, so $\bar{u} = 0$. 

Having established the conditions for expert participation, we must also consider the financial viability of the mechanism from the perspective of the DAO. A mechanism satisfies the Budget Constraint if the total net compensation issued by the DAO to the experts is bounded by a maximum subsidy. This bound applies regardless of the medium of exchange: even when transfers are denominated in reputation points rather than tokens, the mechanism is constrained by the maximum value that experts attribute to reputation. 
\begin{property}[Budget Feasibility]
    A decision-making mechanism satisfies Budget Feasibility with budget $c \geq 0$ if:
    \[ \forall \mathbf{m} \in \mathcal{R}^n, \forall \Delta \in \{-1, 1\}: \quad \sum_{i \in N} \Big( t_i(\mathbf{m}) + r_i(\mathbf{m}, \Delta) \Big) \leq c \]
\end{property}


A mechanism is incentive compatible if experts maximize their expected utility by honestly revealing this information rather than behaving strategically. A DSIC mechanism guarantees that reporting honestly is an expert's best strategy regardless of the reports of other participants.
\begin{property}[Dominant Strategy Incentive Compatibility]
A decision-making mechanism is Dominant Strategy Incentive Compatible (DSIC) if,
    \[ \forall i \in N, \forall \theta_i \in \Theta_i, \forall m_i \in \mathcal{R}, \forall \mathbf{m}_{-i} \in \mathcal{R}^{n-1}: \]
    \[ \mathbb{E}[u_i(m_i^*, \mathbf{m}_{-i} \mid \theta_i)] \geq \mathbb{E}[u_i(m_i, \mathbf{m}_{-i} \mid \theta_i)] \]
    where $m_i^*$ denotes the honest reporting strategy for type $\theta_i$.
\end{property}

Ideally, the designer would observe each expert's full type~$\theta_i = (v_i, p_i)$, discard the preferences~$v_i$, and decide based on the beliefs~$p_i$ alone. But experts have no incentive to report~$p_i$ truthfully---their preferences create an incentive to misrepresent their beliefs. Rather than eliciting~$v_i$ and~$p_i$ separately, we design a mechanism that induces experts to report a single number~$m_i^*$ that encodes both, in a way that brings the collective decision closer to the designer's goal. What counts as ``truthful'' depends on the question being asked. Throughout the paper, we use DSIC to mean that the mechanism has a dominant strategy~$m_i^*(\theta_i)$. By the revelation principle, this is without loss of generality~\cite{gibbard1973manipulation}.

\section{Mechanism Design Space}\label{sec:mechanism}

The mechanism uses the pivotal transfer from Section~\ref{pivotal} and augments it with an outcome-contingent reward rule. The protocol proceeds as follows: (1)~each expert deposits a stake into the governance smart contract; (2)~each expert submits a message $m_i \in \mathbb{R}$; (3)~the allocation rule selects an alternative; (4)~the pivotal transfer is deducted from each expert's escrow; (5)~after a predetermined evaluation period, the on-chain tool observes $\Delta \in \{-1,1\}$; (6)~the reward is distributed and remaining escrow is returned.

We restrict attention to \emph{anonymous} reward rules, where every expert faces the same rule $r_i = r$. Non-anonymous rules, assigning per-expert parameters, let the designer attribute different weights to experts---prioritizing those with demonstrated expertise---without affecting the incentive guarantees. The properties from Section~\ref{Properties} progressively narrow the space of admissible reward rules.


\paragraph*{Symmetry}\label{subsec:Symmetry}

The reward rule has four components: $r_A^+$, $r_A^-$, $r_B^+$, and $r_B^-$, corresponding to the two alternatives and two outcome realizations. These are expert~$i$'s reward functions; throughout this paragraph we suppress the expert subscript. Since the mechanism must not favor either alternative, the reward for choosing $A$ under a given profile must equal the reward for choosing $B$ under the relabeled profile. This reduces the design space from four functions to two:

\begin{proposition}\label{thm:Symmetry}
The mechanism $(x, t, r)$ satisfies Symmetry if and only if
\[r_A^+(\mathbf{m}) = r_B^+(-\mathbf{m}), \qquad r_A^-(\mathbf{m}) = r_B^-(-\mathbf{m}) \qquad \forall\, \mathbf{m}\]
\end{proposition}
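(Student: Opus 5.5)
The plan is to unfold both sides of the Symmetry identity using the expected-utility expression \eqref{expert-utility} and compare them term by term. First I will record how the pieces other than the reward transform under $\mathbf{m}\mapsto -\mathbf{m}$. For the allocation rule \eqref{allocation-rule}, negating all messages swaps $A$ and $B$ whenever $\sum_i m_i \neq 0$; the tie $\sum_i m_i = 0$ has to be handled separately, since both $\mathbf{m}$ and $-\mathbf{m}$ then select $A$. For the pivotal transfer \eqref{Pivotal}, $t_i(-\mathbf{m}) = t_i(\mathbf{m})$, because the two nonzero branches are mapped onto each other with the same magnitude. Under the relabeled type $\tilde\theta_i$ the roles of $(v^A_i,p^A_i)$ and $(v^B_i,p^B_i)$ are exchanged. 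So if $x(\mathbf{m})=A$ and $x(-\mathbf{m})=B$, the left side is
\[
v^A_i + t_i(\mathbf{m}) + p^A_i\, r_A^+(\mathbf{m}) + (1-p^A_i)\, r_A^-(\mathbf{m}),
\]
and the right side, evaluated at $\tilde\theta_i$ with outcome $B$, is
\[
v^A_i + t_i(\mathbf{m}) + p^A_i\, r_B^+(-\mathbf{m}) + (1-p^A_i)\, r_B^-(-\mathbf{m}).
\]
The valuation and transfer terms cancel, so Symmetry at this profile reduces to
\[
p\,[r_A^+(\mathbf{m}) - r_B^+(-\mathbf{m})] + (1-p)\,[r_A^-(\mathbf{m}) - r_B^-(-\mathbf{m})] = 0 .
\]

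For sufficiency, plugging in the stated identities makes this expression vanish immediately. The case $x(\mathbf{m})=B$ is the same computation with the roles of $A$ and $B$ exchanged, using the identities applied at $-\mathbf{m}$.

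For necessity, the key observation is that the rule $r$ cannot depend on the type, while Symmetry must hold for every $\theta_i$, in particular for every $p=p^A_i\in(0,1)$. The displayed expression is affine in $p$ and vanishes on an open interval, so both brackets must be zero, which gives exactly the two stated equalities. I will apply this at each profile $\mathbf{m}$ for which the allocation flips, and it yields the identities both for $r_A^\pm(\mathbf{m})$ and, by relabeling, for the corresponding $r_B^\pm$ statements.

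I expect the main obstacle to be the tie $\sum_i m_i = 0$, where $x(\mathbf{m}) = x(-\mathbf{m}) = A$ and the clean swap breaks. There the condition instead reads
\[
v^A_i + p^A_i r_A^\pm(\mathbf{m}) = v^B_i + p^B_i r_A^\pm(-\mathbf{m})\ \text{terms},
\]
which cannot hold for all types because $v^A_i \neq v^B_i$ in general. My first move is to argue that the statement concerns the reward components, which are defined for every profile as functions of $\mathbf{m}$ regardless of which alternative is actually implemented. Symmetry is then read as a condition on the reward rule off this measure-zero tie set, or equivalently with ties broken symmetrically. On that reading, the identities on $\{\sum_i m_i \neq 0\}$ extend to all $\mathbf{m}$, since $r_A^\pm$ and $r_B^\pm$ are simply functions being compared pointwise. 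I will state this convention explicitly so the ``for all $\mathbf{m}$'' in the proposition is justified.
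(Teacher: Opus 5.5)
Away from ties, your argument is the paper's proof. You expand both sides via \eqref{expert-utility}, cancel $v_i^{o}+t_i(\mathbf m)$ using that negation flips the allocation and $t_i(-\mathbf m)=t_i(\mathbf m)$, and read off both equalities from an affine identity in $p$ that must hold on $(0,1)$. Your observation about ties is correct, and the paper's proof passes over it: it asserts that $x(-\mathbf m)$ is always the opposite of $x(\mathbf m)$, which fails at $\sum_i m_i=0$. At such profiles Symmetry as defined fails for every reward rule, because the free term $v^A_i-v^B_i$ cannot be cancelled. So with ties going to $A$, the ``if'' direction is literally false on the tie set.

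The gap is in your repair. Pointwise equality on $\{\sum_i m_i\neq 0\}$ says nothing about the complement, so the identities do not ``extend to all $\mathbf m$''. If ties are excluded from Symmetry, the rewards at tie profiles are simply unconstrained. A symmetric (say fair-coin) tie-break is not equivalent either. Symmetry at a tie then forces only combinations such as $r_A^+(\mathbf m)-r_A^-(\mathbf m)=r_B^+(-\mathbf m)-r_B^-(-\mathbf m)$ and $r_A^-(\mathbf m)+r_B^-(\mathbf m)=r_A^-(-\mathbf m)+r_B^-(-\mathbf m)$, which do not imply the individual equalities.

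Your own reading, with $r_A^\pm,r_B^\pm$ defined on every profile regardless of the implemented alternative, causes the same problem away from ties. At $\sum_i m_i<0$, Symmetry forces $r_B^\pm(\mathbf m)=r_A^\pm(-\mathbf m)$, which is the stated identity at $-\mathbf m$. The literal identity $r_A^\pm(\mathbf m)=r_B^\pm(-\mathbf m)$ at such $\mathbf m$ relates two values that never enter any expected utility.

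What you can actually prove is the following (and it is what the paper's argument establishes, modulo the same gloss). Symmetry on non-tie profiles holds iff $r_A^\pm(\mathbf m)=r_B^\pm(-\mathbf m)$ whenever $\sum_i m_i>0$. Equivalently, the realized reward satisfies $r_i(\mathbf m,\Delta)=r_i(-\mathbf m,\Delta)$ for all $\sum_i m_i\neq 0$. The quantifier ``for all $\mathbf m$'' in the statement should be presented as a convention on off-path values, with ties excluded from the claim, rather than derived.
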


\begin{proof}
Let $o=x(\mathbf{m})$ denote the chosen alternative under a message profile $\mathbf{m}$ and $\bar{o} = x(-\mathbf{m})$. 
We have
\[
\mathbb{E}[u_i(\mathbf{m} \mid \theta_i)] = v_i^{o} + t_i(\mathbf{m}) + p_i^{o}\, r_o^+(\mathbf{m}) + (1 - p_i^{o})\, r_o^-(\mathbf{m})
\]
\[
\mathbb{E}[u_i(-\mathbf{m} \mid \tilde{\theta}_i)] = v_i^{\bar{o}} + t_i(-\mathbf{m}) + p_i^{\bar{o}}\, r_{\bar{o}}^+(-\mathbf{m}) + (1 - p_i^{\bar{o}})\, r_{\bar{o}}^-(-\mathbf{m})
\]
By the neutrality of the allocation rule (negating all messages selects the opposite alternative: $\bar{o} = x(-\mathbf{m})$ is the opposite of $o = x(\mathbf{m})$) and definition of the pivotal transfer, we have, $v_i^{o} + t_i(\mathbf{m}) = v_i^{\bar{o}} + t_i(-\mathbf{m})$.
Since the relabeled type $\tilde{\theta}_i$ swaps $p^A_i \leftrightarrow p^B_i$, we have $p_i^{\bar{o}}$ under $\tilde{\theta}_i$ equals $p_i^{o}$ under $\theta_i$. Thus $\mathbb{E}[u_i(\mathbf{m} \mid \theta_i)] = \mathbb{E}[u_i(-\mathbf{m} \mid \tilde{\theta}_i)]$ iff  $p_i^{o}\, r_o^+(\mathbf{m}) + (1 - p_i^{o})\, r_o^-(\mathbf{m}) = p_i^{o}\, r_{\bar{o}}^+(-\mathbf{m}) + (1 - p_i^{o})\, r_{\bar{o}}^-(-\mathbf{m})$ for all $p_i^o \in (0,1)$. This holds iff $r_o^+(\mathbf{m}) = r_{\bar{o}}^+(-\mathbf{m})$ and $r_o^-(\mathbf{m}) = r_{\bar{o}}^-(-\mathbf{m})$, which gives $r_A^+(\mathbf{m}) = r_B^+(-\mathbf{m})$ and $r_A^-(\mathbf{m}) = r_B^-(-\mathbf{m})$ for all $\mathbf{m}$.

\end{proof}
Symmetry thus collapses the four reward functions to two. We define $r_i^+(\mathbf{m}) := r_A^+(\mathbf{m})$ and $r_i^-(\mathbf{m}) := r_A^-(\mathbf{m})$, so that
\[
r_i(\mathbf{m}, \Delta) = \begin{cases} r_i^+(\mathbf{m}) & \text{if } \Delta = 1 \\ r_i^-(\mathbf{m}) & \text{if } \Delta = -1 \end{cases}
\]
It remains to determine how $r_i^+$ and $r_i^-$ may depend on the message profile. We write them as functions of~$m_i$ and~$\mathbf{m}_{-i}$; whenever the subscript~$i$ appears explicitly among the arguments, we drop it in the function name. The \emph{expected transfer} of expert~$i$---the transfer plus the reward evaluated in expectation over~$\Delta$ according to the expert's belief~$p_i^o$---is:
\begin{equation}\label{eq:expected-transfer-def}
\hat{t}_i(\mathbf{m}) \;=\; t_i(\mathbf{m}) + p_i^{o}\, r^+(m_i, \mathbf{m}_{-i}) + (1 - p_i^{o})\, r^-(m_i, \mathbf{m}_{-i}).
\end{equation}

\paragraph*{Budget Feasibility}\label{subsec:BudgetConstraint}

The Budget Feasibility restricts how $r^+$ and $r^-$ can grow with~$|t_i|$:

\begin{proposition}\label{prop:affine-form}
If the mechanism $(x, t, r)$ satisfies the Budget Feasibility and the message space is unrestricted ($m_i \in \mathbb{R}$), then $r^+$ and $r^-$ are bounded above by an affine function of~$|t_i|$.
\end{proposition}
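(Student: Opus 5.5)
The plan is to evaluate the Budget Feasibility inequality on message profiles where the pivotal transfers are as positive as possible. Because the message space is unbounded, $|t_i|$ can be made arbitrarily large, and a budget fixed independently of the messages then caps how fast the rewards can grow with $|t_i|$.

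First, fix $\Delta\in\{-1,1\}$ and use the pivotal rule \eqref{Pivotal}. Whenever $t_i(\mathbf{m})\neq 0$, expert $i$ is pivotal and pays, so $t_i(\mathbf{m})=-|t_i(\mathbf{m})|$. Here $|t_i(\mathbf{m})|=\bigl|\sum_{j\neq i}m_j\bigr|$, and in every case $t_i(\mathbf{m})\ge -\bigl|\sum_{j\neq i}m_j\bigr|$. Adding up the transfers,
\[
\sum_{j\in N} t_j(\mathbf{m}) \;\ge\; -\sum_{j\in N}|t_j(\mathbf{m})|.
\]
Substituting this into Budget Feasibility gives, for every $\mathbf{m}$ and every $\Delta$,
\[
\sum_{j\in N} r_j(\mathbf{m},\Delta)\;\le\; c+\sum_{j\in N}|t_j(\mathbf{m})|.
\]
So the total reward is bounded by the affine function $c+\sum_j|t_j|$ of the transfers.

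Second, I isolate a single expert's reward. Rewards are anonymous and symmetric (Proposition~\ref{thm:Symmetry}), so all experts share the functions $r^\pm(m_i,\mathbf{m}_{-i})$. To bound $r^+$ at a given profile, I need the other experts' rewards at that profile to be bounded below. Without such a bound, negative rewards for the others could offset an arbitrarily large $r_i^+$. I would handle this in two steps. If the rewards are non-negative, as the reward interpretation suggests, the bound $r_i(\mathbf{m},\Delta)\le c+\sum_j|t_j(\mathbf{m})|$ follows immediately. Otherwise, I would exploit the fact that $m_i\in\mathbb{R}$ is unrestricted: hold $\mathbf{m}_{-i}$ fixed and vary $m_i$. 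The budget inequality must then hold uniformly in $m_i$, and along this family $|t_i|$ is either $0$ or $\bigl|\sum_{j\neq i}m_j\bigr|$. Any super-affine growth of $r^+$ in $|t_i|$ would therefore violate the fixed bound $c$ once the profile is scaled up.

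Third, I express everything in terms of expert $i$'s own transfer. I would pick profiles where only expert $i$ is pivotal, or bound the other terms $|t_j|$ by a multiple of $|t_i|$. Scaling a profile by $\lambda>0$ scales every transfer by $\lambda$ and leaves the allocation unchanged. The budget inequality along rays $\lambda\mathbf{m}$ then shows that $r^\pm(\lambda\mathbf{m})\le c+K\lambda$ for a constant $K$, that is, $r^\pm\le a+b\,|t_i|$.

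The main obstacle is this second step. Passing from a bound on the sum of rewards to a bound on each expert's reward requires controlling the other experts' rewards from below, and the paper's assumptions say little about that. I would first try to justify non-negativity of rewards, or use Interim IR with the reservation level $\bar u$ to bound rewards below by roughly $\bar u-|t_j|$. This makes the resulting affine bound only slightly weaker, with constants depending on $n$, $c$ and $\bar u$.
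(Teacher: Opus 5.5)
Your first step is fine, and you have found the real crux. Passing from $\sum_j r_j\le c+\sum_j|t_j|$ to a bound on one expert's $r^\pm$ requires a lower bound on the other experts' rewards. None of your remedies supplies one, and none can, because the statement is false under its stated hypotheses. Take $n\ge 2$ and $r^+(m_i,\mathbf m_{-i})=r^-(m_i,\mathbf m_{-i})=m_i^2-\frac{1}{n-1}\sum_{j\neq i}m_j^2$. This rule is anonymous, does not depend on the chosen alternative, and is even in $\mathbf m$, so it meets Proposition~\ref{thm:Symmetry}. Moreover $\sum_i r_i\equiv 0$, hence $\sum_i(t_i+r_i)=\sum_i t_i\le 0\le c$ for every $c\ge 0$. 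Now fix $\mathbf m_{-i}$ with $\sum_{j\neq i}m_j>0$ and let $m_i\to+\infty$. Expert $i$ is never pivotal, so $t_i=0$ while $r^+\to\infty$, and no bound $r^+\le a+b|t_i|$ holds, even with $a,b$ depending on $\mathbf m_{-i}$. This defeats your ``otherwise'' branch: the others' rewards $r(m_j,\mathbf m_{-j})$ also move with $m_i$ and absorb the growth. That branch is essentially the paper's own one-line proof, which skips exactly this cross-subsidy issue. The paper also claims $|t_i|$ can be made arbitrarily large by varying $m_i$ alone, which is false for the reason you give: $|t_i|\in\{0,|\sum_{j\neq i}m_j|\}$. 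Your other two remedies fail as well. Nonnegativity cannot be derived, and assuming it would exclude the paper's own family \eqref{eq:reward_param}, where $r^\pm$ is unbounded below whenever $\alpha^\pm>1$. Interim IR is not a hypothesis here, and in any case it only requires that \emph{some} message clear $\bar u$ for each $(\theta_j,\mathbf m_{-j})$, so it says nothing about $r(m_j,\mathbf m_{-j})$ at an arbitrary profile.

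What you need is to \emph{assume} a lower bound of the kind you hoped to extract from IR, e.g.\ $r^\pm(m_j,\mathbf m_{-j})\ge -L-K|t_j(\mathbf m)|$ for constants $L,K\ge 0$. The affine family satisfies this with $K=\max\{|\alpha^+-1|,|\alpha^--1|\}$ and $L=\max\{0,-\beta^+,-\beta^-\}$. The argument then closes with one fact your third step is missing. By \eqref{Pivotal}, a pivotal $j$ has $|t_j|=|\sum_{k\neq j}m_k|<|m_j|$, so $|t_j|\le|m_j|$ always. Combined with your first step, this gives for every $\mathbf m$ and $\Delta$
\[
r_i(\mathbf m,\Delta)\;\le\; c+(n-1)L+|t_i(\mathbf m)|+(1+K)\sum_{j\neq i}|m_j| .
\]
This is an affine bound in $|t_i|$ whose intercept depends on $\mathbf m_{-i}$, which is the sense in which the paper later uses the proposition. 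The routes you propose for that step do not work. The other transfers cannot be bounded by a multiple of $|t_i|$: for $n=3$ the profile $\lambda(10,-6,-6)$ has $t_1=0$ but $t_2=t_3=-4\lambda$. Scaling along rays only yields a ray-dependent slope and says nothing on rays where $t_i=0$.
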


\begin{proof}
For any fixed $\mathbf{m}_{-i}$, the transfer $t_i$ can be made arbitrarily large in magnitude by varying $m_i$. If $r^+$ (or $r^-$) grows faster than $|t_i|$ in the positive direction, the total payout $\sum_{i \in N}( t_i + r^{\pm}_i)$ grows without bound, violating the Budget Feasibility. Hence the positive part of $r^+$ and $r^-$ is at most affine in~$|t_i|$.
\end{proof}

The proposition bounds the reward from above but not from below, and does so for each fixed $\mathbf{m}_{-i}$ separately. A priori, the admissible form is therefore:
\begin{align*}
r^+(m_i, \mathbf{m}_{-i}) &= \big(\alpha^{+}(\mathbf{m}_{-i}) - 1\big)\, t_i(m_i, \mathbf{m}_{-i}) + \beta^{+}(\mathbf{m}_{-i}) - g^+(m_i, \mathbf{m}_{-i}), \\
r^-(m_i, \mathbf{m}_{-i}) &= \big(\alpha^{-}(\mathbf{m}_{-i}) - 1\big)\, t_i(m_i, \mathbf{m}_{-i}) + \beta^{-}(\mathbf{m}_{-i}) - g^-(m_i, \mathbf{m}_{-i})
\end{align*}
where $g^+, g^- \geq 0$ are arbitrary non-negative functions representing additional penalties beyond the affine component. Dominant-strategy incentive compatibility (Theorem~\ref{thm:dsic}) requires the induced transformation to be independent of~$\mathbf{m}_{-i}$, which forces $\alpha^\pm$ and $\beta^\pm$ to be constant in~$\mathbf{m}_{-i}$; otherwise the dominant strategy would depend on the other experts' reports. We set $g^+ = g^- = 0$, restricting to affine rewards, which implement the linear opinion pool that matches our instantiation of the designer's goal. Non-affine rewards remain DSIC whenever they correspond to a strictly increasing message transformation (Lemma~\ref{lem:transformation}). These functions might be used to implement a nonlinear aggregation rule; characterizing the non-affine family is left to future work. The resulting affine parameterization is:
\begin{equation}\label{eq:reward_param}
r^+(m_i, \mathbf{m}_{-i}) = (\alpha^{+} - 1)\, t_i(m_i, \mathbf{m}_{-i}) + \beta^{+} \qquad
r^-(m_i, \mathbf{m}_{-i}) = (\alpha^{-} - 1)\, t_i(m_i, \mathbf{m}_{-i}) + \beta^{-}
\end{equation}
where $\alpha^+, \alpha^-, \beta^+, \beta^- \in \mathbb{R}$ are design parameters. The coefficient $(\alpha^\pm - 1)$ measures how much the reward amplifies or attenuates the pivotal transfer when the outcome is positive or negative, and $\beta^\pm$ is a fixed payment distributed independently of the value transfer.

Substituting~\eqref{eq:reward_param} into~\eqref{eq:expected-transfer-def} and writing $\alpha(p) = \alpha^+ p + \alpha^-(1-p)$ and $\beta(p) = \beta^+ p + \beta^-(1-p)$, the expected transfer simplifies to:
\begin{equation}\label{eq:transfer-decomposition}
\hat{t}_i(\mathbf{m}) = \alpha(p_i^{o}) \cdot t_i(\mathbf{m}) + \beta(p_i^{o}),
\end{equation}
where $o = x(\mathbf{m})$ is the chosen alternative.

The $\beta$ term acts as a belief-dependent shift of the expert's valuation; write $\hat{v}_i := v_i + \beta(p^A_i) - \beta(p^B_i)$ for the resulting \emph{effective valuation}, where $v_i = v^A_i - v^B_i$. The expert behaves as a pivotal agent with valuation $\hat{v}_i$ in place of $v_i$.

The four parameters $(\alpha^+, \alpha^-, \beta^+, \beta^-)$ of the mechanism class~\eqref{eq:reward_param} constitute the design space. The budget feasibility further restricts this space:

\begin{proposition}\label{prop:alpha-cap}
Under the Budget Feasibility, the design parameters satisfy $\alpha^+ \geq 0$, $\alpha^- \geq 0$, $\beta^+ \leq c/n$, and $\beta^- \leq c/n$.
\end{proposition}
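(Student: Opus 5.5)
We plan to test the Budget Feasibility inequality on well-chosen message profiles. Substituting the affine rule~\eqref{eq:reward_param} into the budget constraint, for each realization $\Delta\in\{-1,1\}$ the total net payout becomes
\[
\sum_{i\in N}\big(t_i(\mathbf{m}) + r_i(\mathbf{m},\Delta)\big) \;=\; \alpha^{\pm}\sum_{i\in N} t_i(\mathbf{m}) + n\beta^{\pm} \;\le\; c,
\]
where the sign $\pm$ matches $\Delta$. Since the constraint must hold for every $\mathbf{m}$ and both values of $\Delta$, we treat the two realizations separately. The argument is the same for each, so we only track $(\alpha^+,\beta^+)$.

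\emph{Bound on $\beta^\pm$.} We pick a profile in which no expert is pivotal, so that $t_i(\mathbf{m})=0$ for all $i$. For $n\ge 2$ the profile $\mathbf{m}=\mathbf{0}$ works: $\sum_{j\ne i} m_j = 0$, so neither strict case of~\eqref{Pivotal} applies. With $n=1$, the sum over $j\ne i$ is empty and no expert is ever pivotal. In either case the constraint reduces to $n\beta^\pm\le c$, which gives $\beta^\pm\le c/n$.

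\emph{Bound on $\alpha^\pm$.} We argue by contradiction. Suppose $\alpha^+<0$. We then need profiles on which $\sum_i t_i(\mathbf{m})\to-\infty$, since the payout $\alpha^+\sum_i t_i + n\beta^+$ would then diverge to $+\infty$ and exceed $c$. Pivotal transfers are always non-positive: in the first case of~\eqref{Pivotal}, $t_i=-\sum_{j\ne i}m_j<0$, and in the second, $t_i=\sum_{j\ne i}m_j<0$. So it suffices to make one expert pivotal against a large opposing aggregate. For $n\ge 2$, set $m_2=M>0$, $m_1=-2M$, and all other messages to $0$. Expert~1 is pivotal, since $-m_1=2M>\sum_{j\ne 1}m_j=M>0$, so $t_1=-M$. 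Expert~2 satisfies $\sum_{j\ne 2}m_j=-2M<0$ and $-m_2=-M>-2M$, so it is pivotal too, with $t_2=-2M$. Non-pivotal experts contribute $0$. The total is therefore $-3M$, which is unbounded below as $M\to\infty$. This contradicts the constraint, so $\alpha^\pm\ge 0$. The same argument applied to $\Delta=-1$ handles $\alpha^-$.

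The main obstacle is the degenerate case $n=1$. There all transfers vanish and $\alpha^\pm$ is unconstrained, so the $\alpha$ bound holds only for $n\ge 2$, which is the committee setting. We will state this explicitly. We also need to check the signs of the strict inequalities in the pivotal cases to make sure the chosen profile really lands in them. The remaining steps are routine.
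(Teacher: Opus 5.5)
Your argument is essentially the paper's: you substitute the affine rule to get $\alpha^{\pm}\sum_i t_i(\mathbf{m})+n\beta^{\pm}\le c$, read off $\beta^{\pm}\le c/n$ from a profile with no pivotal expert, and rule out $\alpha^{\pm}<0$ by making $\sum_i t_i$ unboundedly negative; your explicit profile and your $n=1$ caveat (all transfers vanish there, so $\alpha^{\pm}$ is unconstrained) make precise what the paper leaves implicit. One slip does not affect the conclusion: expert~2 in your profile is not pivotal, because the second case of~\eqref{Pivotal} needs $-m_2<\sum_{j\neq 2}m_j$, i.e.\ $-M<-2M$, which fails (the outcome is $B$ with or without expert~2), so $t_2=0$ and $\sum_i t_i=-M$, which still diverges as $M\to\infty$.
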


\begin{proof}
Substituting~\eqref{eq:reward_param} into the Budget Feasibility for $\Delta = 1$:
\[
\sum_{i \in N} \Big( t_i(\mathbf{m}) + (\alpha^+ - 1)\, t_i(\mathbf{m}) + \beta^+ \Big) = \alpha^+\! \sum_{i \in N} t_i(\mathbf{m}) + n\beta^+ \;\leq\; c \qquad \forall\, \mathbf{m}.
\]
If the expert is pivotal, $t_i(\mathbf{m}) \leq 0$ for all~$i$, and $\sum_{i} t_i(\mathbf{m})$ can be made arbitrarily negative by increasing the magnitude of the messages. If $\alpha^+ < 0$, the term $\alpha^+ \sum_i t_i$ is positive and unbounded, violating the constraint. Hence $\alpha^+ \geq 0$. Given $\alpha^+ \geq 0$, the left-hand side is maximized when $\sum_i t_i = 0$ (no expert is pivotal), yielding $n\beta^+ \leq c$, i.e., $\beta^+ \leq c/n$. The same argument applied to $\Delta = -1$ gives $\alpha^- \geq 0$ and $\beta^- \leq c/n$.
\end{proof}

It remains to determine which parameter values $(\alpha^+, \alpha^-, \beta^+, \beta^-)$ of the mechanisms defined by ~\eqref{eq:reward_param} yield DSIC and IR, and what dominant strategy they induce.

\paragraph*{Dominant Strategy Incentive Compatibility}\label{subsec:dsic}

The following lemma shows that applying a strictly increasing transformation to the message space of a DSIC mechanism preserves incentive compatibility, with the critical value transformed by the inverse.

\begin{lemma}\label{lem:transformation}
Let $(x, t)$ be a DSIC mechanism with winning set $W_i$ and critical value $c_i = \inf\{m_i : x(m_i, \mathbf{m}_{-i}) \in W_i\}$. If the designer wants to coerce an agent to submit $\varphi(v_i)$ instead of $v_i$, where $\varphi : \mathcal{R} \to \mathcal{R}$ is strictly increasing, then the transformed mechanism is DSIC with critical value $\varphi^{-1}(c_i)$.
\end{lemma}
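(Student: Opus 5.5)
We make precise what ``the transformed mechanism'' is, and then reduce its incentive properties to those of the original $(x,t)$ by a change of variables. Given $(x,t)$ and the strictly increasing $\varphi$, we define the transformed mechanism $(\tilde x,\tilde t)$ on the same message space. It first maps each received message through $\varphi^{-1}$ and then runs the original rules on the decoded profile:
\[
\tilde x(\mathbf{m}) = x\big(\varphi^{-1}(m_1),\dots,\varphi^{-1}(m_n)\big), \qquad \tilde t_i(\mathbf{m}) = t_i\big(\varphi^{-1}(m_1),\dots,\varphi^{-1}(m_n)\big).
\]
An agent of type $v_i$ who submits $m_i$ then obtains exactly the utility they would obtain in $(x,t)$ by submitting $\varphi^{-1}(m_i)$ against the decoded profile of the others. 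When $\varphi$ is not onto $\mathcal{R}$, messages outside its range are treated as the nearest boundary report, or simply excluded from the message space. Either choice changes nothing, because every deviation available in the transformed mechanism is still a deviation available in the original one.

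\textbf{Step 1: DSIC.} Fix $\mathbf{m}_{-i}$ and write $\mathbf{m}'_{-i} = \varphi^{-1}(\mathbf{m}_{-i})$ componentwise. For any $m_i$, the utility of agent $i$ in the transformed mechanism is $u_i(\varphi^{-1}(m_i),\mathbf{m}'_{-i}\mid v_i)$ evaluated in $(x,t)$. Since $(x,t)$ is DSIC, $v_i$ maximizes $u_i(\cdot,\mathbf{m}'_{-i}\mid v_i)$ over all reports. The report $\varphi(v_i)$ decodes to $v_i$, so it attains this maximum. Because $\varphi^{-1}$ is injective, no other message can do strictly better. Hence $\varphi(v_i)$ is a dominant strategy.

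\textbf{Step 2: critical value.} Using monotonicity of $\varphi^{-1}$, we show that the winning set is preserved:
\[
\{m_i : \tilde x(m_i,\mathbf{m}_{-i})\in W_i\} = \{m_i : \varphi^{-1}(m_i) \in S\}, \qquad S=\{m : x(m,\mathbf{m}'_{-i})\in W_i\}.
\]
The set $S$ is an up-set with infimum $c_i$, by Myerson monotonicity. Its preimage under the increasing map $\varphi^{-1}$ is therefore an up-set, and its infimum is $\varphi(c_i)$ as expressed in transformed message units. Equivalently, the threshold on the decoded message is $c_i$, so the threshold on the agent's report is the corresponding transform of $c_i$. I expect the statement's phrase ``critical value $\varphi^{-1}(c_i)$'' to refer to this change of coordinates, namely the threshold pulled back into the original valuation scale. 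I would state the direction explicitly to avoid ambiguity.

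\textbf{Main obstacle.} The substance is routine. The obstacle is the bookkeeping of the direction of $\varphi$ versus $\varphi^{-1}$, together with boundary and tie issues when $\varphi$ is discontinuous or not surjective, which affect whether the infimum is attained. I would handle this by working with infima and noting that strict monotonicity preserves order, so the tie-breaking rule of $x$ at the threshold transfers unchanged. Finally, the payment charged equals the critical value expressed in the original units, so Myerson's characterization continues to apply.
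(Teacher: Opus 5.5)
There is a genuine gap: the mechanism you build is not the one the lemma is about, and it never produces the stated critical value. Decoding every message through $\varphi^{-1}$ only relabels the message space. An agent of type $v_i$ still wins exactly when $v_i \ge c_i$ and pays $c_i$, computed on the decoded profile. So your threshold is $\varphi(c_i)$ in message units and $c_i$ in valuation units. Once all agents use the encoded strategies, your mechanism gives the same outcomes and payments as a function of types as the original $(x,t)$. Your reconciliation at the end of Step 2 is wrong: pulling $\varphi(c_i)$ back to the valuation scale gives $c_i$, not $\varphi^{-1}(c_i)$, so $\varphi^{-1}(c_i)$ appears nowhere in your construction. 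Step 1 is fine for what it proves, but the reason no message does better is that every message decodes to a report available in the original mechanism, not that $\varphi^{-1}$ is injective.

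The missing idea is that the allocation rule keeps acting on the raw messages, and it is the charge that gets transformed. In the paper's argument, the agent submits $\varphi(v_i)$ into the unchanged $x$. As a function of type, $x(\varphi(v_i),\mathbf{m}_{-i})$ is still monotone because $\varphi$ is strictly increasing. The set of winning types is $\{v_i : \varphi(v_i) \ge c_i\}$, whose infimum is $\varphi^{-1}(c_i)$. By Myerson's characterization, $\varphi(v_i)$ is then a dominant strategy exactly when the agent is charged $\varphi^{-1}(c_i)$ in utility units. This is what Theorem~\ref{thm:dsic} uses. The reward turns the pivotal charge $c_i$ into the expected charge $\alpha(p_i^o)\,c_i$, which equals $\varphi^{-1}(c_i)$ for $\varphi(\hat v_i)=\hat v_i/\alpha(p_i^o)$, while the allocation still sums the undecoded messages.

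Your construction cannot play this role, for three reasons:
\begin{itemize}
\item Each agent's $\varphi$ depends on their private beliefs $p_i$, so the designer cannot compute $\varphi^{-1}$ of a message.
\item You decode everyone's messages with a single common $\varphi$, which does not match agent-specific transformations.
\item Even if decoding were possible, it would return the decision to the sign of $\sum_i \hat v_i$ and erase exactly the $1/\alpha(p_i^o)$ reweighting the lemma is meant to deliver.
\end{itemize}
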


\begin{proof}
The original mechanism is DSIC, so the allocation rule $x(v_i, \mathbf{m}_{-i})$ is monotonically increasing in $v_i$ and agents pay their critical value when in the winning set. Under the transformation, the allocation becomes $x(\varphi(v_i), \mathbf{m}_{-i})$, which is increasing in $v_i$ since $\varphi$ is strictly increasing. The new critical value is
\[
c_i' = \inf\{v_i : \varphi(v_i) \in W_i(\mathbf{m}_{-i})\} = \inf\{v_i : \varphi(v_i) \geq c_i\} = \varphi^{-1}(c_i).
\]
Since monotonicity is preserved and with critical value $\varphi^{-1}(c_i)$, the transformed mechanism is DSIC.
\end{proof}
In other words, if the designer applies a strictly increasing transformation $\varphi$ to the critical value, then the mechanism remains DSIC and the agents have dominant strategy $\varphi^{-1}(v)$.

\begin{theorem}\label{thm:dsic}
Let $\alpha^+, \alpha^- \geq 0$ with $(\alpha^+, \alpha^-) \neq (0, 0)$. The mechanism $(x, t, r)$ is DSIC with dominant strategy
\begin{equation}\label{eq:ds-dao}
m^*(v_i, p_i) = \begin{cases} \displaystyle\frac{v_i + \beta(p^A_i) - \beta(p^B_i)}{\alpha(p^A_i)} & \text{if } v_i + \beta(p^A_i) - \beta(p^B_i) > 0 \\[10pt] \displaystyle\frac{v_i + \beta(p^A_i) - \beta(p^B_i)}{\alpha(p^B_i)} & \text{if } v_i + \beta(p^A_i) - \beta(p^B_i) < 0 \end{cases}
\end{equation}
where $v_i = v^A_i - v^B_i$.
\end{theorem}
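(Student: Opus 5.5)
Fix expert $i$, a type $\theta_i$, and an arbitrary profile $\mathbf{m}_{-i}$, and write $S=\sum_{j\neq i}m_j$. The plan is to compute expert $i$'s expected utility as a function of $m_i$ using the decomposition~\eqref{eq:transfer-decomposition}, reduce it to a comparison between the utility from outcome $A$ and from outcome $B$, and then show that $m^*$ selects the better outcome for every $S$. Only the induced outcome and transfer matter. By the allocation rule~\eqref{allocation-rule} and the pivotal transfer~\eqref{Pivotal}, the whole range of $m_i$ splits into two cases.

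If $m_i\ge -S$, the outcome is $A$ and the transfer is $t_i=\min(S,0)$: it equals $S$ when $S<0$ (pivotal) and $0$ otherwise. Substituting into~\eqref{eq:transfer-decomposition}, the utility is $U_A=v^A_i+\alpha(p^A_i)\min(S,0)+\beta(p^A_i)$.

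If $m_i<-S$, the outcome is $B$ and the transfer is $t_i=-\max(S,0)$, so the utility is $U_B=v^B_i-\alpha(p^B_i)\max(S,0)+\beta(p^B_i)$.

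The boundary cases $S=0$ and $m_i=-S$ follow directly from the strict inequalities in~\eqref{Pivotal}, and I would check them explicitly. Since utility depends on $m_i$ only through which of the two regions it lies in, a report is optimal if and only if it lands in the region with the larger of $U_A,U_B$.

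Next I compute $U_A-U_B$, writing $\hat v_i=v_i+\beta(p^A_i)-\beta(p^B_i)$. For $S\ge0$ this gives $U_A-U_B=\hat v_i+\alpha(p^B_i)S$, and for $S<0$ it gives $U_A-U_B=\hat v_i+\alpha(p^A_i)S$.

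Suppose $\hat v_i>0$. For $S\ge 0$, $A$ is weakly preferred since $\alpha\ge0$, and $m^*=\hat v_i/\alpha(p^A_i)>0\ge -S$ selects $A$. For $S<0$, $A$ is preferred iff $S\ge -\hat v_i/\alpha(p^A_i)$, which is exactly the condition $m^*\ge -S$. The case $\hat v_i<0$ is symmetric with $\alpha(p^B_i)$; it can be handled directly or by invoking Proposition~\ref{thm:Symmetry} / Symmetry to relabel. In both cases, reporting $m^*$ yields $\max(U_A,U_B)$, so no deviation $m_i$ does better. Equivalently, in the language of Lemma~\ref{lem:transformation}, on each side of zero the mechanism acts as a pivotal mechanism with valuation $\hat v_i$ whose critical value is rescaled by $\alpha(p^o_i)$. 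This is the strictly increasing transformation $\varphi(\hat v)=\hat v/\alpha(p^{A})$ for $\hat v>0$ and $\hat v/\alpha(p^{B})$ for $\hat v<0$; the direct computation above avoids having to verify the lemma's hypotheses.

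The main obstacle is degeneracy in $\alpha$. The hypothesis $(\alpha^+,\alpha^-)\ne(0,0)$ together with $p\in(0,1)$ gives $\alpha(p)=\alpha^+p+\alpha^-(1-p)>0$, so the divisions in~\eqref{eq:ds-dao} are well defined and the thresholds are finite. The step I must get right is the sign bookkeeping: the pivotal charge is weighted by $\alpha$ of the \emph{chosen} outcome, so the relevant slope is $\alpha(p^A_i)$ precisely when $i$'s favored side $A$ wins against opposition ($S<0$). That is why the strategy uses $\alpha(p^A_i)$ when $\hat v_i>0$ and $\alpha(p^B_i)$ when $\hat v_i<0$. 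The case $\hat v_i=0$ is left unspecified by the theorem; there any report in $\{0\}$ or near it is optimal, and I would note this explicitly.
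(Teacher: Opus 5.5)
Your argument is correct for the intended, tie-consistent pivotal transfer, and it takes a different route from the paper. The paper computes no utilities. It defines the piecewise map $\varphi(\hat v_i)=\hat v_i/\alpha(p^A_i)$ for $\hat v_i>0$ and $\hat v_i/\alpha(p^B_i)$ for $\hat v_i<0$, notes that it is strictly increasing, and invokes Lemma~\ref{lem:transformation} (the monotone-allocation/critical-value characterization) to conclude that $\varphi(\hat v_i)$ is dominant. That route is shorter and presents $m^*$ as a monotone re-encoding of the effective valuation, which is the viewpoint the paper reuses for non-affine rewards. However, the lemma is phrased as a reparametrization of messages. It leaves implicit why an outcome-dependent payment scaling $\alpha(p^o_i)$ amounts to rescaling the critical value separately on each side of zero. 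Your comparison $U_A-U_B=\hat v_i+\alpha(p^B_i)S$ for $S\ge 0$ and $\hat v_i+\alpha(p^A_i)S$ for $S<0$ makes exactly that step explicit. It is self-contained, derives $\alpha(p)>0$ from the hypotheses, and yields the full best-response set as a by-product. Handle the case $\hat v_i<0$ by the same direct computation rather than via Symmetry: ties go to $A$ under both $\mathbf{m}$ and $-\mathbf{m}$, so the relabeling is not exact at the boundary.

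The one claim that fails is that the boundary cases ``follow directly from the strict inequalities in \eqref{Pivotal}''. Take $S<0$ and $m_i=-S$. The aggregate is $0$, so \eqref{allocation-rule} selects $A$. Yet neither pivotal case applies (the condition $-m_i<\sum_{j\neq i}m_j$ reads $S<S$), so $t_i=0$ and expert $i$ obtains $v^A_i+\beta(p^A_i)$. This strictly exceeds $U_A=v^A_i+\alpha(p^A_i)S+\beta(p^A_i)$, and when $\hat v_i>0$ it also exceeds $U_B=v^B_i+\beta(p^B_i)$. So utility does not depend on $m_i$ only through the region. For a type with $\hat v_i>0$, the unique best response to any $S<0$ is $m_i=-S$; with $n\ge 2$, the mechanism as literally written therefore has no dominant strategy for such types.

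The repair is to write the second case of \eqref{Pivotal} as $-m_i\le\sum_{j\neq i}m_j<0$, matching the tie-breaking of \eqref{allocation-rule}. Then $t_i=\min(S,0)$ on all of $\{m_i\ge -S\}$ and your proof goes through verbatim. The paper's proof has the same blind spot, buried in the premise of Lemma~\ref{lem:transformation} that the underlying pivotal mechanism is DSIC. So this is a defect of the stated mechanism rather than of your method, but you should state the amended tie rule instead of asserting that the boundary is automatic. A minor point: for $\hat v_i=0$ the only report optimal against every $S$ is $m_i=0$; nearby reports are not.
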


\begin{proof}
Fix expert~$i$ with beliefs $p_i = (p^A_i, p^B_i)$. The $\beta$ term shifts the expert's effective valuation to $\hat{v}_i$; the $\alpha$ term scales the transfer by $\alpha(p_i^o)$. When $\hat{v}_i > 0$ the expert prefers~$A$, so being in the winning set means $o = A$ and the scaling is $\alpha(p^A_i)$; when $\hat{v}_i < 0$ the expert prefers~$B$, giving $\alpha(p^B_i)$. Define the transformation
\[
\varphi(\hat{v}_i) = \begin{cases} \hat{v}_i / \alpha(p^A_i) & \text{if } \hat{v}_i > 0 \\ \hat{v}_i / \alpha(p^B_i) & \text{if } \hat{v}_i < 0 \end{cases}
\]
which is strictly increasing when $\alpha(p^A_i), \alpha(p^B_i) > 0$. By Lemma~\ref{lem:transformation}, the mechanism is DSIC with dominant strategy $m_i^* = \varphi(\hat{v}_i)$, yielding~\eqref{eq:ds-dao}.
\end{proof}
Note that $\varphi$ must depend only on the expert's own type, not on~$\mathbf{m}_{-i}$: if it did, the dominant strategy would depend on~$\mathbf{m}_{-i}$ as well, and the mechanism would no longer be DSIC. The dominant strategy is weakly optimal for all type profiles and strictly optimal whenever the expert assigns non-zero probability to being pivotal, i.e., to the event that their report changes the collective decision.

When experts have no idiosyncratic preferences, the mechanism reduces to pure belief elicitation, let, $\Delta\beta := \beta^+ - \beta^-$, 

\begin{corollary}
When $v_i = 0$ for all $i$ and $\Delta\beta > 0$, the dominant strategy reduces to $m^*_i = \Delta\beta\,(p^A_i - p^B_i)/\alpha(p^o_i)$. In particular, taking $\alpha(p^o_i) = 1$ (e.g.\ $\alpha^+ = \alpha^- = 1$) recovers the equal-weight linear opinion pool of~\eqref{ideal-rule} exactly; other admissible $\alpha$ yield a belief-weighted pool with weights $1/\alpha(p^o_i)$.
\end{corollary}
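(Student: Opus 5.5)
The corollary follows by specializing Theorem~\ref{thm:dsic}. Setting $v_i = 0$ and expanding $\beta(p) = \beta^+ p + \beta^-(1-p) = \beta^- + \Delta\beta\, p$, the effective valuation becomes
\[
\hat v_i = \beta(p^A_i) - \beta(p^B_i) = \Delta\beta\,(p^A_i - p^B_i),
\]
because the constant $\beta^-$ cancels. Since $\Delta\beta > 0$, $\hat v_i$ has the same sign as $p^A_i - p^B_i$. So an expert who believes $A$ is more likely to succeed falls in the first branch of~\eqref{eq:ds-dao}, divided by $\alpha(p^A_i)$, and the opposite case falls in the second branch, divided by $\alpha(p^B_i)$.

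In each branch the alternative $o$ selected by the expert's own sign is the one whose $\alpha$ appears. I will therefore write the denominator as $\alpha(p^o_i)$, where $o$ is the expert's preferred alternative (not necessarily the collective outcome). This gives $m^*_i = \Delta\beta\,(p^A_i-p^B_i)/\alpha(p^o_i)$.

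The boundary case $p^A_i = p^B_i$ (so $\hat v_i = 0$) is not covered by~\eqref{eq:ds-dao}. There the expert is indifferent between alternatives, so any nonpositive report is weakly optimal, since pivotal payments are $\alpha$-scaled critical values. I will fix $m^*_i = 0$, which agrees with the formula.

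For the second claim, take $\alpha^+ = \alpha^- = 1$. Then $\alpha(p) = p + (1-p) = 1$ for every $p$, and the hypothesis $(\alpha^+,\alpha^-)\neq(0,0)$ of Theorem~\ref{thm:dsic} holds. Hence $\sum_i m^*_i = \Delta\beta \sum_i (p^A_i - p^B_i)$, and because $\Delta\beta>0$,
\[
\operatorname{sgn}\Big(\sum_i m^*_i\Big) = \operatorname{sgn}\Big(\sum_i (p^A_i - p^B_i)\Big).
\]
This is the sign-match condition~\eqref{sign-match} with equal weights, so $x(\mathbf m^*) = x^*(\mathbf p)$. One tie-handling detail needs care: the allocation~\eqref{allocation-rule} selects $A$ when the sum equals $0$, while~\eqref{ideal-rule} selects $B$ in that case. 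The two rules therefore agree everywhere except on the measure-zero tie set. I will either note this exception explicitly or read ``exactly'' as agreement of signs in the sense of~\eqref{sign-match}. I expect this tie bookkeeping to be the only real obstacle; the rest is direct substitution.

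For general admissible $\alpha$, I will observe that $\sum_i m^*_i = \Delta\beta \sum_i w_i\,(p^A_i - p^B_i)$ with $w_i = 1/\alpha(p^o_i)$. These weights are positive and well defined because $\alpha^\pm\ge 0$, not both zero, and $p^o_i\in(0,1)$ together give $\alpha(p^o_i)>0$. Up to the positive factor $\Delta\beta$, the aggregate is therefore a linear pool with belief-dependent weights $1/\alpha(p^o_i)$, matching the weight $b/(1+p(a-1))$ quoted in the introduction with $a = \alpha^+/\alpha^-$ and $b = 1/\alpha^-$.
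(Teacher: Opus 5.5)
Your proposal is correct and follows the paper's own route: there the corollary is just the substitution of $v_i=0$ into the dominant strategy of Theorem~\ref{thm:dsic}, using $\beta(p^A_i)-\beta(p^B_i)=\Delta\beta\,(p^A_i-p^B_i)$. Your remarks on the positivity of $\alpha(p^o_i)$ and on the tie set, where \eqref{allocation-rule} selects $A$ but \eqref{ideal-rule} selects $B$, are careful additions the paper leaves implicit.

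One small slip is in your boundary case $p^A_i=p^B_i$: it is not true that any nonpositive report is weakly optimal. A report with $m_i<-\sum_{j\neq i}m_j<0$ makes the expert pivotal and costs them $\alpha(p^A_i)\sum_{j\neq i}m_j>0$. Only reports that keep the expert non-pivotal are optimal, and $m_i=0$ is always such a report, so your choice $m^*_i=0$ still stands.
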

If the designer assigns non-anonymous reward rules, each expert~$i$ receives a per-expert $\Delta\beta_i$, which lets the designer weight experts by criteria such as expertise. Taking $\alpha = 1$, $m^*_i = \Delta\beta_i\,(p^A_i - p^B_i)$, the ratio $\Delta\beta_i / \Delta\beta_j$ fixes how much more expert~$i$'s belief differential counts than expert~$j$'s in the aggregate. For instance, setting $\Delta\beta_i = 2\,\Delta\beta_j$ makes expert~$i$ pivotal-equivalent to two experts~$j$ holding the same belief.

\paragraph*{Interim Individual Rationality}\label{subsec:ir}

Using the reservation level $\bar{u}$ from Property \ref{def:IR}.

\begin{proposition}\label{prop:ir}
The mechanism satisfies Interim IR if and only if $\beta(p) \geq \bar{u}$ for all $p \in (0,1)$. 
\end{proposition}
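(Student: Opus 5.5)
The plan is to compare, for fixed $\mathbf{m}_{-i}$, the expert's expected utility under the dominant report $m_i^*$ of Theorem~\ref{thm:dsic} with the abstention payoff $\mathbb{E}[u_i(\mathbf{m}_{-i}\mid\theta_i)] = v_i^{x(\mathbf{m}_{-i})}$. By \eqref{eq:transfer-decomposition}, participating yields $v_i^{o} + \alpha(p_i^o)\,t_i(\mathbf{m}) + \beta(p_i^o)$ with $o = x(m_i^*,\mathbf{m}_{-i})$. Since $t_i = 0$ and $\alpha\ge 0$, the key quantity is the \emph{gain over abstention}. I will split into two cases, according to whether expert $i$ is pivotal under $m_i^*$.

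\emph{Case 1 (not pivotal).} Here $o = x(\mathbf{m}_{-i})$ (with the convention that the empty-report outcome is the outcome of the others' aggregate) and $t_i = 0$. The gain is exactly $\beta(p_i^o)$.

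\emph{Case 2 (pivotal).} Here $o$ is the expert's preferred alternative under effective valuation $\hat v_i$, and $t_i$ is the negative critical value. I will use the DSIC structure: the expert's net utility from flipping the decision, $\hat v_i$-gain minus the scaled payment $\alpha(p^o_i)|t_i|$, is nonnegative because $|t_i|\le |m_i^*| = |\hat v_i|/\alpha(p_i^o)$. Rewriting the gain as
\[
\bigl(v_i^o - v_i^{\bar o}\bigr) + \beta(p_i^o) - \beta(p_i^{\bar o}) + \alpha(p_i^o)t_i + \beta(p_i^{\bar o}),
\]
the first terms equal $|\hat v_i| - \alpha(p_i^o)|t_i| \ge 0$. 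So the gain is at least $\beta(p_i^{\bar o})$, and also at least something comparable to $\beta(p_i^o)$ minus a nonnegative amount. I need to verify the exact bound here carefully: the cleanest route is to note that the pivotal expert could instead submit a report that is not pivotal and get $\beta(p_i^{\bar o})$. Optimality of $m_i^*$ then gives gain $\ge \beta(p_i^{\bar o})$.

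For sufficiency, assuming $\beta(p)\ge\bar u$ for all $p$, both cases give gain $\ge \min_o \beta(p_i^o)\ge\bar u$, so the IR inequality holds with $m_i^*$. For necessity, suppose $\beta(p_0)<\bar u$ for some $p_0\in(0,1)$. I will construct a type with $p_i^A=p_i^B=p_0$ and $v_i=0$, so $\hat v_i=0$, and choose $\mathbf{m}_{-i}$ with a large $|\sum_{j\ne i} m_j|$ so that no finite report of ours both changes the outcome and is worth it. Then every report yields gain at most $\beta(p_0)$. In the pivotal case the gain is $\beta(p_0) - \alpha|t_i| \le \beta(p_0)$. This violates IR for every $m_i^*$.

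I expect the main obstacle to be the pivotal case of sufficiency, namely showing rigorously that the gain is bounded below by a $\beta$-value. This is delicate when $\alpha(p^o_i)=0$ or when $\hat v_i$ and $v_i$ have opposite signs. The intended resolution is the deviation argument: any expert can replicate the non-pivotal payoff, so by dominance the gain never falls below $\min\{\beta(p_i^A),\beta(p_i^B)\}$.
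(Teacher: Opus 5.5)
Your proof is correct. For sufficiency you land on the paper's argument: a report that leaves the decision at $x(\mathbf{m}_{-i})$ earns gain $\beta(p_i^{x(\mathbf{m}_{-i})})$, and dominance of $m_i^*$ carries this lower bound over to the truthful report. For example, $m_i=0$ gives $t_i=0$, and under the existential quantifier in Property~\ref{def:IR} it already witnesses IR without any appeal to DSIC. Your explicit Case~2 computation, $|\hat v_i|-\alpha(p_i^o)|t_i|\ge 0$ via $|t_i|\le|m_i^*|$, is a valid direct check that the paper does not carry out. It uses only the form of $m_i^*$, not its optimality.

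Where you genuinely differ is necessity. The paper only notes that the non-pivotal report yields $\beta(p)$ and asserts that ``IR requires this to be at least $\bar u$''. By itself that does not rule out a pivotal report lifting the gain above $\bar u$. Your witness type ($v_i^A=v_i^B$, $p_i^A=p_i^B=p_0$) closes this gap: the outcome is then payoff-irrelevant, and $t_i\le 0$ together with $\alpha(p_0)\ge 0$ caps every report's gain at $\beta(p_0)<\bar u$. This holds for every $\mathbf{m}_{-i}$, so your requirement of a large $|\sum_{j\neq i}m_j|$ is unnecessary.

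Two small points:
\begin{itemize}
\item Your worry about $\alpha(p_i^o)=0$ is moot. Since $\alpha^\pm\ge 0$, $(\alpha^+,\alpha^-)\neq(0,0)$ and $p\in(0,1)$, we always have $\alpha(p)>0$.
\item In your opening paragraph, ``$t_i=0$'' should read ``$t_i\le 0$''.
\end{itemize}
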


\begin{proof}
Under DSIC, truthful reporting weakly dominates any alternative report, including any report that makes expert non-pivotal. It therefore suffices to verify that the non-pivotal deviation satisfies IR; the truthful (possibly pivotal) report can only improve on it. When non-pivotal, $t_i = 0$ and the allocation is unchanged. The participation gain over $v_i^{x(\mathbf{m}_{-i})}$ is $\beta(p^{x(\mathbf{m})}_i)$. IR requires this to be at least $\bar{u}$ for all $p \in (0,1)$. Since $\beta(p) = \beta^+ p + \beta^-(1-p)$ is linear in $p$, the condition $\beta(p) \geq \bar{u}$ for all $p \in (0,1)$ holds if and only if $\beta^+, \beta^- \geq \bar{u}$. 
\end{proof}

The four design parameters $(\alpha^+, \alpha^-, \beta^+, \beta^-)$ are constrained by the mechanism's structural and incentive properties. The resulting feasible region restrictions, are summarized in Table~\ref{tab:feasible}. The mechanisms of the form~\eqref{eq:reward_param} leave the designer free to choose the following parameters $(\alpha^+, \alpha^-, \beta^+, \beta^-)$, where the ratio $\alpha^+/\alpha^-$ controls the reweighting of experts valuation and the $\beta$ parameters determine the strength of the belief-based incentive. 

\begin{table}[htbp]
\centering
\begin{tabular}{lll}
\toprule
\textbf{Property} & \textbf{Constraint} & \textbf{Reference} \\
\midrule
Budget Feasibility & $\alpha^+, \alpha^- \geq 0$ & Proposition~\ref{prop:alpha-cap} \\
                  & $\beta^+, \beta^- \leq c/n$ & \\
DSIC              & $(\alpha^+, \alpha^-) \neq (0,0)$ & Theorem~\ref{thm:dsic} \\
Interim IR        & $\beta^+, \beta^- \geq \bar{u}$ & Proposition~\ref{prop:ir} \\
\midrule
\textbf{Feasible region restrictions} & $\alpha^+, \alpha^- \geq 0,\; (\alpha^+, \alpha^-) \neq (0,0)$ & \\
                          & $\beta^+, \beta^- \in [\bar{u},\; c/n]$ & \\
\bottomrule
\end{tabular}
\caption{Parameter restrictions from each mechanism property.}
\label{tab:feasible}
\end{table}

\paragraph*{A note on Complexity}\label{subsec:complexity}

The mechanism is computationally lightweight. The allocation rule computes a single sum $\sum_{i=1}^n m_i$ in $O(n)$ arithmetic operations. The Pivotal transfer for each expert requires evaluating $\sum_{j \neq i} m_j = \sum_{i=1}^n m_i - m_i$, which is $O(1)$ per expert given $\sum_{i=1}^n m_i$, yielding $O(n)$ total. The reward rule performs a constant number of operations per expert, thus the entire mechanism therefore runs in $O(n)$ time. The remaining cost is the evaluation of $\Delta$, a single evaluation per decision.
From a smart contract perspective, the on-chain cost is equally modest. The contract stores $n$ messages and the budget $c$, computes the sum, determines the allocation, and records the Pivotal transfers---all in a single transaction with $O(n)$ storage writes and arithmetic. The ex-post reward is computed in a second transaction after $\Delta$ is resolved, requiring $O(n)$ operations. The only overhead beyond a standard voting mechanism is the evaluation of~$\Delta$, which depends on the choice of the evaluation tool, not on the mechanism itself.

\section{Parameter Analysis}\label{sec:parameterization}

This section studies how the choice of parameters affects the mechanism's decisions, how the $\beta$ terms control the strength of the belief signal and how the ratio $\alpha^+/\alpha^-$ reweights experts according to their beliefs. We further reformulate the designer's choice as an optimization problem over the mechanisms of the form~\eqref{eq:reward_param}. The resulting objective has no closed form, so we evaluate it through Monte Carlo simulation, comparing the mechanism against majority voting with and without outcome-contingent rewards and a DSR applied to a random expert across committee sizes, preference biases, and belief-noise regimes.

\subsection{Parameters}\label{subsec:parameters}

Each parameter pair shapes the dominant strategy in a distinct way: the $\beta$ terms control the strength of the belief signal, while the $\alpha$ reweights experts according to their beliefs.

\paragraph*{Betas} The belief differential $\beta(p^A_i) - \beta(p^B_i) = \Delta\beta\,(p^A_i - p^B_i)$, determines how strongly the expert's message reflects their beliefs. When $\Delta\beta > 0$, experts who believe $A$ is more likely to succeed ($p^A_i > p^B_i$) shift their message toward $A$, and symmetrically, experts who believe $B$ is more likely to succeed ($p^B_i > p^A_i$) shift their message toward $B$. The magnitude $\Delta\beta$ controls the strength of this incentive in both directions. 

Each expert's preferred-outcome belief $p_i^o$ depends on the sign of $\hat{v}_i = v_i + \Delta\beta\,(p^A_i - p^B_i)$, so varying $\Delta\beta$ can flip which alternative an expert prefers, thereby changing the $\alpha$ weight. For this reason, widening $\Delta\beta$ can induce additional preference flips that alter the $\alpha$-weights and may not always improve classification. Maximizing the belief signal is therefore not unconditionally optimal. In the environments we simulate, however, this effect does not bind: accuracy is monotone in $\Delta\beta$, so the optimum is attained at the budget bound $\Delta\beta = c/n - \bar{u}$ (i.e. $\beta^+ = c/n, \beta^- = \overline{u}$).


\paragraph*{Alphas} Each expert's message is scaled by $1/\alpha(p_i^o)$, where $p_i^o$ is the expert's belief that their preferred alternative will lead to a positive outcome. Since $\alpha(p)$ is linear in $p$, the relative values of $\alpha^+$ and $\alpha^-$ determine how the mechanism reweights experts based on $p_i^o$. When $\alpha^+ < \alpha^-$, experts with high $p_i^o$ have smaller $\alpha(p_i^o)$ and thus larger messages, contributing more to the aggregate. When $\alpha^+ > \alpha^-$, the opposite holds. When $\alpha^+ = \alpha^-$, all experts are weighted equally regardless of $p_i^o$.

\begin{proposition}\label{prop:alpha-ratio}
For any $\lambda > 0$, the mechanisms with parameters $(\alpha^+, \alpha^-)$ and $(\lambda\alpha^+, \lambda\alpha^-)$ select the same alternative for every type profile.
\end{proposition}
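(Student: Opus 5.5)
The plan is to work directly with the dominant strategy of Theorem~\ref{thm:dsic}, since the mechanism's decision is determined by the sign of the aggregate of dominant-strategy messages through the allocation rule~\eqref{allocation-rule}. Fix $\lambda>0$ and the $\beta$ parameters. First I will note that the scaled pair $(\lambda\alpha^+,\lambda\alpha^-)$ still satisfies the hypotheses of Theorem~\ref{thm:dsic}: both entries are non-negative and the pair is not $(0,0)$. Hence the scaled mechanism is also DSIC, and its dominant strategy is given by~\eqref{eq:ds-dao} with $\alpha$ replaced by $\alpha_\lambda(p)=\lambda\alpha^+p+\lambda\alpha^-(1-p)=\lambda\,\alpha(p)$.

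The key observation is that the effective valuation $\hat v_i = v_i+\beta(p^A_i)-\beta(p^B_i)$ does not involve $\alpha$ at all. So the case split in~\eqref{eq:ds-dao} (the sign of $\hat v_i$) is identical under both parameter choices. In each case the message under the scaled mechanism is $\hat v_i/(\lambda\alpha(p^o_i)) = m^*_i/\lambda$, with the same $o$. Therefore every expert's dominant message is divided by the common positive constant $\lambda$. Summing gives $\sum_i m^{*,\lambda}_i = \lambda^{-1}\sum_i m^*_i$. Since $\lambda>0$ preserves the sign, and in particular the non-negativity test $\sum_i m_i\ge 0$ in~\eqref{allocation-rule}, both mechanisms select the same alternative for every type profile.

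The only delicate point I expect is the boundary case $\hat v_i=0$, which~\eqref{eq:ds-dao} leaves unspecified. Since $\alpha$ plays no role there, I will handle it by adopting a common convention that the message is $0$ in both mechanisms. That message is trivially invariant under division by $\lambda$. I will also remark that a possible zero denominator (e.g.\ $\alpha(p^o_i)=0$ when one of $\alpha^\pm$ vanishes) cannot occur, because $p\in(0,1)$ and $(\alpha^+,\alpha^-)\ne(0,0)$ with both entries non-negative imply $\alpha(p)>0$. This positivity is preserved by scaling. With these checks, the argument reduces entirely to the homogeneity of $\alpha(\cdot)$ in $(\alpha^+,\alpha^-)$.
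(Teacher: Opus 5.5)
Your proof is correct and follows the paper's route: substitute the scaled parameters into the dominant strategy of Theorem~\ref{thm:dsic}, observe that $\hat v_i$ (and hence $o$) does not depend on $\alpha$, and pull the positive factor $1/\lambda$ out of the aggregate message. Your version uses $\alpha_\lambda(p)=\lambda\,\alpha(p)$ directly, so it is slightly cleaner than the paper's. The paper also factors out $\alpha^-$, which implicitly assumes $\alpha^->0$, whereas your argument also covers $\alpha^-=0$ and handles the tie $\sum_i m_i=0$ explicitly.
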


\begin{proof}
Under the dominant strategy, $m_i^* = \hat{v}_i / \alpha(p_i^o)$. Replacing $(\alpha^+, \alpha^-)$ by $(\lambda\alpha^+, \lambda\alpha^-)$:
\[
m_i^*\big|_{(\lambda\alpha^+, \lambda\alpha^-)} = \frac{\hat{v}_i}{\lambda\alpha^+ p_i^o + \lambda\alpha^-(1 - p_i^o)} = \frac{1}{\lambda\alpha^-} \cdot \frac{\hat{v}_i}{1 + p_i^o(\frac{\alpha^+}{\alpha^-} - 1) }
\]
Since $\lambda, \alpha^- > 0$, the sign of $\sum_i m_i^*$ is preserved. Hence the decision depends on $(\alpha^+, \alpha^-)$ only through the ratio $\alpha^+/\alpha^-$.
\end{proof}

When $\alpha^- > 0$, the decision can be parameterized by the ratio $\frac{\alpha^+}{\alpha^-}$. When $\frac{\alpha^+}{\alpha^-} < 1$, the denominator decreases with $p$, so experts with higher $p_i^o$ contribute more to the aggregate.
At $\alpha^- = 0$, the transfer is refunded when the outcome is negative, so the mechanism remains DSIC yet experts are never held accountable for pivotal decisions that lead to bad outcomes.
Under the dominant strategy, the expert's net financial exposure is $\alpha^- |t_i|$ and the required escrow is invariant to~$\alpha^-$. When reducing $\alpha^-$, experts compensate by submitting proportionally larger messages. The net exposure is always capped by the message the experts submit.

\paragraph*{Parameterized Designer's Objective}\label{subsec:designer-problem}
The Designer's Problem from Section~\ref{designer-objective} --- matching the sign of the aggregate message to the sign of the aggregate belief differential, as in~\eqref{sign-match} --- can now be instantiated with the derived dominant strategy.
The designer chooses the mechanism parameters $(\alpha^+, \alpha^-, \beta^+, \beta^-)$ to maximize the probability of correct classification for a given environment.
The designer's problem:
\begin{equation}\label{eq:designer-problem}
\max_{\alpha^+, \alpha^-, \beta^+, \beta^-} \; \mathbb{P}_{v_i, p^A_i, p^B_i}\!\left(\text{sgn}\!\left(\sum_{i=1}^{n} \frac{v_i + \Delta \beta (p^A_i - p^B_i)}{\alpha(p_i^{o})}\right) = \text{sgn}\!\left(\sum_{i=1}^{n} (p^A_i - p^B_i)\right)\right)
\end{equation}
subject to
\begin{align*}
& \alpha^+, \alpha^- \geq 0, \quad (\alpha^+, \alpha^-) \neq (0,0), \\
& \beta^+, \beta^- \in [\bar{u},\; c/n]
\end{align*}
for some random distributions of the agents types.

\subsection{Simulations}\label{subsec:simulation-results}

Since the belief $p_i^o$ enters the denominator of each expert's dominant strategy, and the value of $o$ depends on the sign of $\hat{v}_i$, the message aggregate is a sum of ratios of dependent random variables with no known closed form. We therefore evaluate the mechanism's performance via Monte Carlo simulation: for each parameter configuration, we repeatedly draw random type profiles from specified distributions, compute every expert's dominant-strategy message, apply the allocation rule, and report \emph{accuracy}: the fraction of the drawn profiles on which the mechanism's decision coincides with the objectively superior alternative — the one with the higher true success probability. 

A direct empirical comparison with decision markets is not well-posed in our setting. A decision market's realized accuracy is not a function of the type distribution alone; it depends on the market microstructure (trader's budget, risk attitude, the market scoring rule, and the number and order of trading rounds), none of which our mechanism requires. Moreover, under a deterministic decision rule no decision market admits a dominant strategy in the multi-agent setting; restoring strict properness requires a randomized decision rule~\cite{chen2011decision}. Quadratic voting and the standard pivotal mechanism are also excluded, because neither is able to extract beliefs from experts.

We therefore compare our mechanism against two decision rules: \emph{majority voting with rewards} (MV w/ rewards), where each expert votes for $A$ if $\hat{v}_i > 0$ and the majority wins; and \emph{majority voting without rewards} (MV), where each expert votes for $A$ if $v_i > 0$. All three rules are evaluated on the same drawn type profiles. 

We further compare against a decision scoring rule (DSR). A DSR pays a single expert a positive affine function of the principal's realized utility \cite{decision-scoring-rules}; with a binary outcome and principal utility $u = 1$ for a good outcome and $u = -1$ for a bad one, this rewards the expert for the realized impact of the recommended action. Since DSR requires a single expert, we implement the DSR as a \emph{random dictator}: one expert is drawn uniformly at random and their recommended alternative is taken.


\paragraph*{Setup}
We specify a parametric model for the experts' types, defining distributions for idiosyncratic preferences and belief noise.
The true success probabilities are symmetric around $0.5$, separated by a \emph{quality gap} parameter $\delta \in \{0.05, 0.15, 0.25\}$, corresponding to hard, moderate, and easy classification tasks, capturing the case the designer typically cares about most: one alternative is likely to succeed while the other is likely to fail. When both alternatives would succeed ($p^A, p^B$ both high) or both would fail ($p^A, p^B$ both low), the stakes of choosing correctly are lower. By the Symmetry property, we can without loss of generality let $A$ be the superior alternative.

Idiosyncratic preferences are drawn independently with mean~$\mu_v \in \{0, -1\}$, where $\mu_v = 0$ is the unbiased case and $\mu_v = -1$ represents a committee that on average opposes the superior alternative. The variance is normalized to one, so $\Delta\beta$ is measured in units of preference intensity. Each expert receives a noisy private signal centered at the truth with standard deviation~$\sigma \in \{0.2, 0.8, 2.0\}$ (\emph{belief noise}), truncated to $(0,1)$. A small $\sigma$ corresponds to a well-informed committee, while a large $\sigma$ represents dispersed information. Each expert's type is drawn as:
\[
v_i \sim \mathcal{N}(\mu_v, 1), \qquad p^A_i \sim \text{TN}\!\left(\tfrac{1}{2} + \delta,\; \sigma^2,\; 0,\; 1\right), \qquad p^B_i \sim \text{TN}\!\left(\tfrac{1}{2} - \delta,\; \sigma^2,\; 0,\; 1\right),
\]
where $\text{TN}(\mu, \sigma^2, a, b)$ denotes a truncated normal distribution on $(0, 1)$. Results with uniform preferences $v_i \sim U(\mu_v - 1, \mu_v + 1)$ are similar and omitted for brevity.

A grid search over $\alpha^+/\alpha^- \in [0, 2]$ reveals that the optimal ratio is consistently close to zero: the accuracy gap between $\alpha^+ = 0$ and the optimal $\alpha^+/\alpha^-$ never exceeds~$3\%$ across all tested parameter configurations, so we adopt $\alpha^+ = 0$ as a default.

For each parameter configuration, we draw $10^6$ independent type profiles. We vary $\Delta\beta \in [0, 5]$ on a grid of 30 equally spaced values and committee sizes $n \in \{5, 49\}$ (chosen odd to avoid ties in majority voting).

All simulations were implemented in Python with a fixed random seed. Table~\ref{tab:sim-params} summarizes the full set of varied parameters and their ranges.

\begin{table}[htbp]
\centering
\begin{tabular}{lll}
\toprule
\textbf{Parameter} & \textbf{Symbol} & \textbf{Values} \\
\midrule
Quality gap        & $\delta$        & $\{0.05,\, 0.15,\, 0.25\}$ \\
Preference bias    & $\mu_v$         & $\{0,\, -1\}$ \\
Belief noise       & $\sigma$        & $\{0.2,\, 0.8,\, 2.0\}$ \\
Committee size     & $n$             & $\{5,\, 49\}$ \\
Belief budget      & $\Delta\beta$   & $[0,\,5]$\\
\bottomrule
\end{tabular}
\caption{Parameters varied in the Monte Carlo simulations.}
\label{tab:sim-params}
\end{table}

\paragraph*{Majority Voting}
Figures~\ref{fig:accuracy-unbiased} and~\ref{fig:accuracy-biased} show that our
mechanism outperforms majority voting across almost every parameter
configuration, the exception being the small quality gap ($\delta = 0.05$) in the
large committee ($n = 49$), where the two are comparable. The margin is widest
when the belief budget $\Delta\beta$ is small.

\begin{figure}[h]
\centering
\includegraphics[width=0.69\textwidth, trim=0cm 0cm 0cm 0cm, clip]{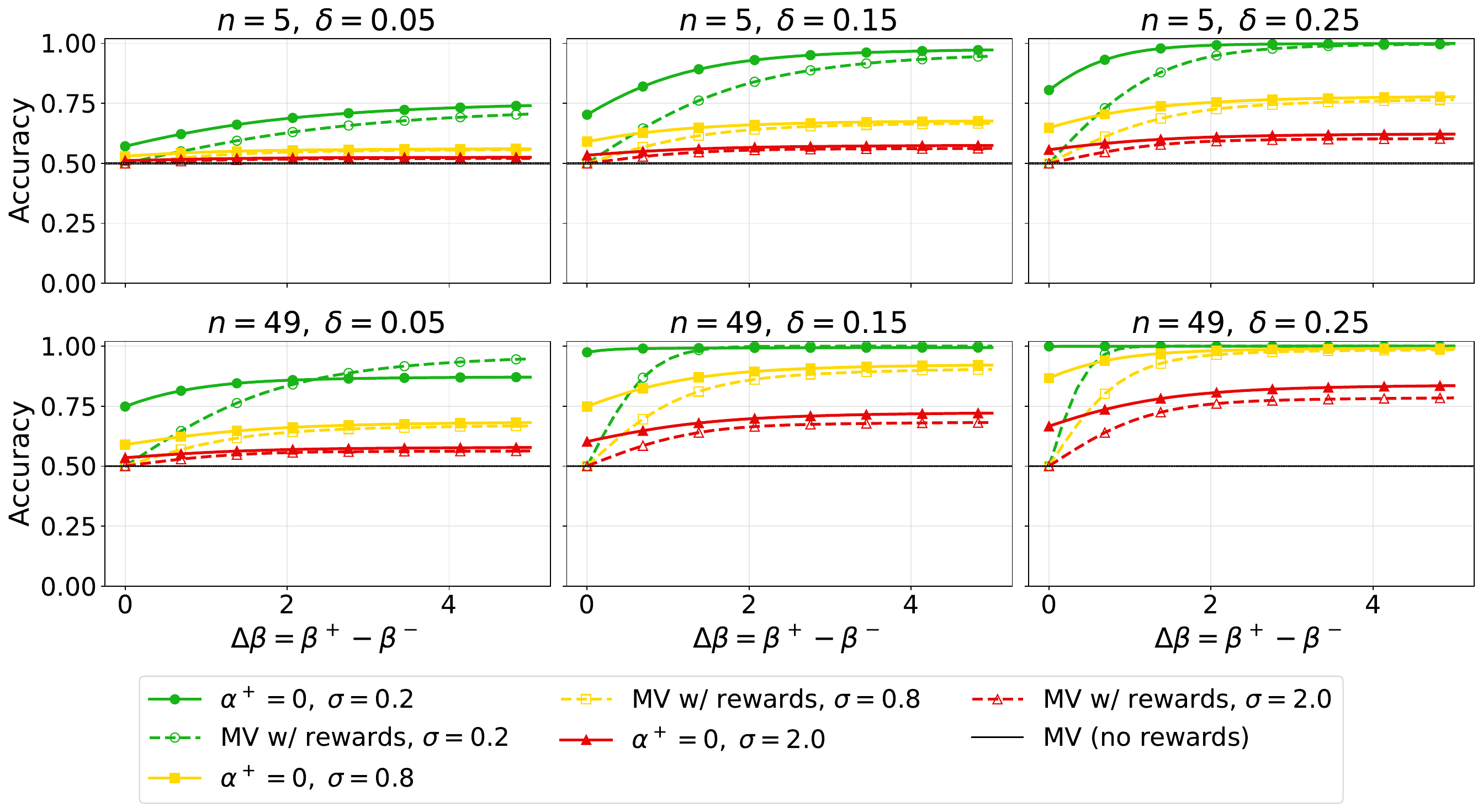}
\caption{Unbiased committee ($\mu_v = 0$): classification accuracy as a function of $\Delta\beta$. Solid: our mechanism ($\alpha^+ = 0$). Dashed: MV with rewards. MV without rewards is the horizontal line at 0.5. Rows: committee size $n$. Columns: quality gap~$\delta$. Colors: belief noise~$\sigma$. Preferences: $v_i \sim \mathcal{N}(\mu_v, 1)$.}
\label{fig:accuracy-unbiased}
\end{figure}

At $\Delta\beta = 0$ the mechanism already reweighs experts through $1/\alpha(p_i^o)$, so beliefs scale each report even with no belief budget. For $\Delta\beta > 0$, messages are additionally pushed toward the expert's belief differential and accuracy rises steeply before flattening, as the belief signal comes to dominate preference noise. The quality gap $\delta$ and belief noise $\sigma$ act as expected: a larger $\delta$ separates the true success probabilities and eases the task for every method, while a larger $\sigma$ disperses the signals and degrades accuracy.

From a design standpoint, $\Delta\beta$ is bounded above by $c/n - \bar{u}$, so the designer can pick the minimum $\Delta\beta$ -- and hence the minimum per-expert budget the DAO must commit -- needed to reach a target accuracy.

\begin{figure}[h]
\centering
\includegraphics[width=0.69\textwidth, trim=0cm 0cm 0cm 0cm, clip]{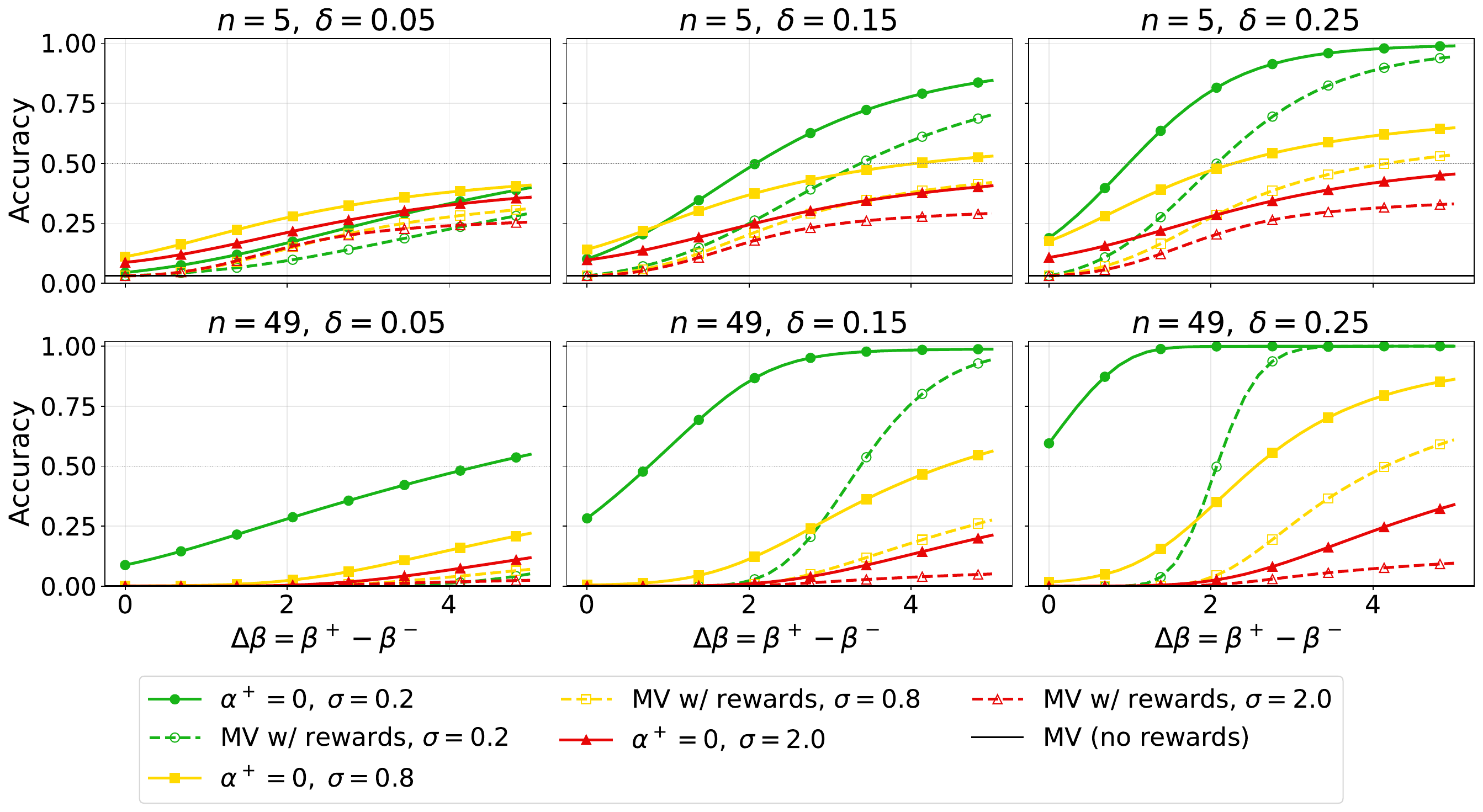}
\caption{Biased committee ($\mu_v = -1$), where members on average prefer the inferior alternative. Conventions as in Figure~\ref{fig:accuracy-unbiased}, except for the MV with no rewards that has a value close to 0.}
\label{fig:accuracy-biased}
\end{figure}

For the biased committee (Figure~\ref{fig:accuracy-biased}), majority voting without rewards has approximately 0 accuracy, and with rewards it keeps only the sign of the shifted valuation, discarding the magnitude. Our mechanism instead scales each report by $1/\alpha(p_i^o)$, so an expert backing an option they believe is unlikely to succeed carries little weight in the aggregate.

Results with uniform preferences $v_i \sim U(\mu_v - 1, \mu_v + 1)$ are qualitatively similar, with one exception. At $\mu_v = -1$ the support is $[-2, 0]$, so every expert prefers~$B$: at $\Delta\beta = 0$ no expert ever votes for~$A$, reweighting alone cannot shift the decision, and accuracy is zero -- unlike the normal case, whose tails leave some experts with $v_i > 0$. Once $\Delta\beta > 0$ the belief signal flips some experts toward~$A$ and the mechanism recovers.

\paragraph*{Random dictator}
Figure~\ref{fig:dictator-unbiased} shows that in the unbiased committee our mechanism outperforms the DSR across every configuration.

\begin{figure}[h!]
\centering
\includegraphics[width=0.69\textwidth]{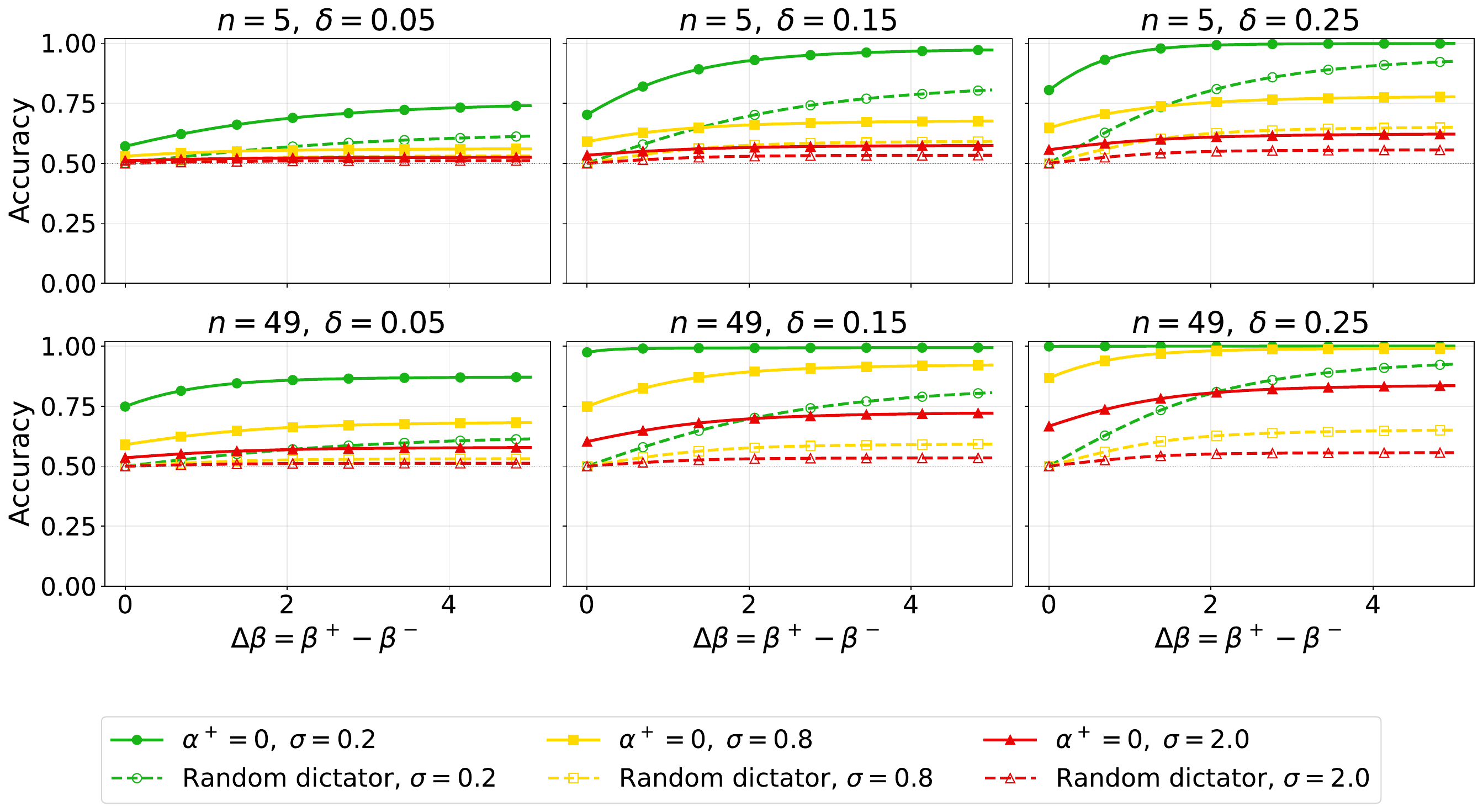}
\caption{Unbiased committee ($\mu_v = 0$): classification accuracy of our
mechanism ($\alpha^+ = 0$, solid) versus the random dictator with the same
outcome-contingent incentives (dashed), as a function of $\Delta\beta$. Conventions as in Figure~\ref{fig:accuracy-unbiased}
$\sigma$. Preferences: $v_i \sim \mathcal{N}(\mu_v, 1)$.}
\label{fig:dictator-unbiased}
\end{figure}

In the biased committee (Figure~\ref{fig:dictator-biased}) the DSR is more accurate at small $\Delta\beta$ in some scenarios.

\begin{figure}[h!]
\centering
\includegraphics[width=0.69\textwidth]{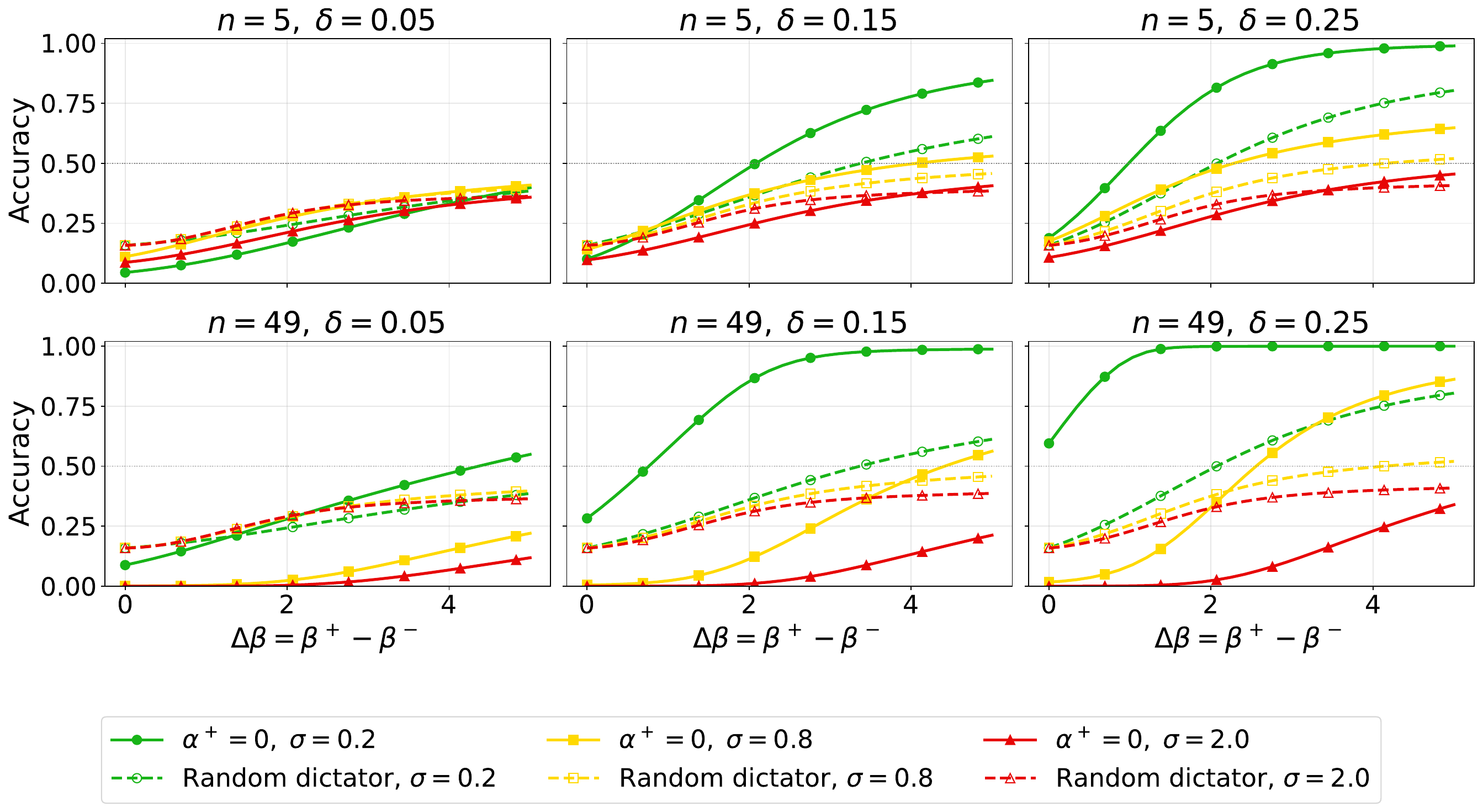}
\caption{Biased committee ($\mu_v = -1$): our mechanism ($\alpha^+ = 0$, solid)
versus the random dictator (dashed). Conventions as in Figure~\ref{fig:accuracy-unbiased}.}
\label{fig:dictator-biased}
\end{figure}


\section{Conclusions}\label{sec:conclusions}

This paper presents a mechanism for committee decision-making with outcome-contingent transfers that satisfies dominant-strategy incentive compatibility, interim individual rationality, and budget feasibility. Each expert submits a single scalar that jointly encodes preferences and beliefs, and the mechanism maximizes the probability that its decision agrees with a linear opinion pool of the experts' private beliefs. The framework abstracts away the medium of exchange, the method of outcome evaluation, and the committee election process, providing a reference construction that practical designs can instantiate. The mechanism provides value even without a budget: the $\alpha$ reweighs experts, so informative beliefs can enter the aggregate, yielding higher classification accuracy than a standard pivotal mechanism at no additional cost. The budget amplifies this effect by adding an explicit belief shift to each expert's message.

When experts have no idiosyncratic preferences, the dominant strategy reduces to a scaled belief differential and the mechanism selects the alternative according to the linear opinion pool. This elicitation is strictly incentive compatible whenever each expert assigns non-zero probability to every possible aggregate of the other experts' reports. Decision scoring rules cannot achieve this in the multi-agent setting: under a deterministic decision rule, each expert's expected payoff depends on the collective decision, which in turn depends on the other experts' reports, so agents have an incentive to misreport to change the collective decision in their favor~\cite{decision-markets}. Restoring properness requires randomized decision rules~\cite{chen2011decision}, which introduce practical difficulties. Our mechanism sidesteps this impossibility by inducing a dominant strategy under a deterministic allocation rule and across multiple agents.

We focus on binary choices, which cover the majority of DAO proposals~\cite{UnderstandingDAOs}. The designer's objective reduces to matching the sign of a linear opinion pool. Although the linear opinion pool generalizes naturally to $k$ alternatives, the mechanism design side is less straightforward. With two alternatives, the classification boundary is a single threshold that maps to the pivotal mechanism's allocation rule. With $k \geq 3$, the designer must match an $\arg\max$ over $k$ pooled probabilities, and the interaction between outcome-contingent rewards and a multi-dimensional message space remains an open problem.






\bibliography{camara-ready}

@article{UnderstandingDAOs,
author = {Wang, Qin and Yu, Guangsheng and Sai, Yilin and Sun, Caijun and Nguyen, Lam and Chen, Shiping},
year = {2025},
month = {01},
pages = {1-19},
title = {Understanding {DAOs}: An Empirical Study on Governance Dynamics},
volume = {PP},
journal = {IEEE Transactions on Computational Social Systems},
doi = {10.1109/TCSS.2025.3539889}
}

@article{cong2025centralized,
  title={Centralized governance in decentralized organizations},
  author={Cong, Lin William and Rabetti, Daniel and Wang, Charles CY and Yan, Yu},
  journal={Available at SSRN 5168660},
  year={2025}
}

@article{rikken2019governance,
  title={Governance challenges of blockchain and decentralized autonomous organizations},
  author={Rikken, Olivier and Janssen, Marijn and Kwee, Zenlin},
  journal={Information Polity},
  volume={24},
  number={4},
  pages={397--417},
  year={2019},
  publisher={SAGE Publications Sage UK: London, England}
}

@inproceedings{kitzler2024governance,
  title={The Governance of Decentralized Autonomous Organizations: A Study of Contributors' Influence, Networks, and Shifts in Voting Power},
  author={Kitzler, Stefan and Balietti, Stefano and Saggese, Pietro and Bernhard, Hans-Martin and Strohmaier, Markus},
  booktitle={Financial Cryptography and Data Security (FC)},
  year={2024}
}

@article{han2025dao,
  title={{DAO} Governance},
  author={Han, Jiasun and Lee, Jongsub and Li, Tao},
  journal={Journal of Corporate Finance},
  year={2025},
  publisher={Elsevier}
}

@article{dimitri2023voting,
  title={Voting in {DAOs}},
  author={Dimitri, Nicola},
  journal={Distributed ledger technologies: Research and practice},
  volume={2},
  number={4},
  pages={1--12},
  year={2023},
  publisher={ACM New York, NY}
}

@inproceedings{talmon2023social,
  title={Social Choice Around Decentralized Autonomous Organizations: On the Computational Social Choice of Digital Communities},
  author={Talmon, Nimrod},
  booktitle={Proceedings of the 22nd International Conference on Autonomous Agents and Multiagent Systems (AAMAS)},
  pages={1768--1773},
  year={2023}
}

@misc{sharma2024future,
  title={Future of algorithmic organization: Large-scale analysis of decentralized autonomous organizations ({DAOs}). {arXiv}},
  author={Sharma, T and Potter, Y and Pongmala, K and Wang, H and Miller, A and Song, D and Wang, Y},
  year={2024}
}

@article{davo2025rise,
  title={The rise and fall of {DAOstack}: lessons for decentralized autonomous organizations},
  author={Dav{\'o}, David and Arroyo, Javier and Hassan, Samer and Semenzin, Silvia},
  journal={PeerJ Computer Science},
  volume={11},
  pages={e3320},
  year={2025},
  publisher={PeerJ Inc.}
}

@misc{daostack_whitepaper,
  title = {{DAOstack} Whitepaper: An Operating System for Collective Intelligence},
  author = {{DAOstack}},
  year = {2018},
  howpublished = {\url{https://www.allcryptowhitepapers.com/daostack-whitepaper/}},
  note = {Accessed: 2026-03-14}
}

@misc{metadao,
  title = {{MetaDAO}},
  author = {{MetaDAO}},
  year = {n.d.},
  url = {https://www.metadao.fi/},
  note = {Accessed: 2026-03-13}
}

@inproceedings{kiayias2022sok,
  title={{SoK}: blockchain governance},
  author={Kiayias, Aggelos and Lazos, Philip},
  booktitle={Proceedings of the 4th ACM Conference on Advances in Financial Technologies},
  pages={61--73},
  year={2022}
}

@misc{bip2,
  author       = {Luke Dashjr},
  title        = {{BIP} 2: {BIP} process, revised},
  year         = {2016},
  howpublished = {\url{https://github.com/bitcoin/bips/blob/master/bip-0002.mediawiki}},
  note         = {Bitcoin Improvement Proposal}
}

@article{becze2015eip,
  title={{EIP}-1: {EIP} purpose and guidelines},
  author={Becze, Martin and Jameson, Hudson and others},
  journal={Ethereum Improvement Proposals},
  volume={27},
  year={2015}
}

@misc{polkadot_web3,
  title        = {{Web3} and {Polkadot}},
  author       = {{Polkadot Wiki}},
  year         = {2025},
  url          = {https://wiki.polkadot.com/general/web3-and-polkadot/},
  note         = {Accessed: 2026-03-12}
}

@misc{projectcatalyst_docs,
  author       = {{Project Catalyst}},
  title        = {Project Catalyst Documentation},
  year         = {n.d.},
  howpublished = {\url{https://docs.projectcatalyst.io/}},
  note         = {Accessed: 2026-03-12}
}

@misc{lesaege2019kleros,
  author  = {Lesaege, Cl{\'e}ment and Ast, Federico and George, William},
  title   = {Kleros: Short Paper v1.0.7},
  year    = {2019},
  url     = {https://kleros.io/whitepaper.pdf},
  urldate = {2026-03-25}
}

@book{de2014essai,
  title={Essai sur l'application de l'analyse {\`a} la probabilit{\'e} des d{\'e}cisions rendues {\`a} la pluralit{\'e} des voix},
  author={de Condorcet, Nicolas},
  year={1785},
  publisher={Imprimerie Royale, Paris}
}

@article{austensmith1996,
  title={Information Aggregation, Rationality, and the {Condorcet} Jury Theorem},
  author={Austen-Smith, David and Banks, Jeffrey S.},
  journal={American Political Science Review},
  volume={90},
  number={1},
  pages={34--45},
  year={1996},
  publisher={Cambridge University Press}
}

@techreport{li1999conflicts,
  title={Conflicts and Common Interests in Committees},
  author={Li, Hao and Rosen, Sherwin and Suen, Wing},
  type={NBER Working Paper},
  number={7158},
  institution={National Bureau of Economic Research},
  year={1999}
}

@article{nitzan1982,
  title={Optimal Decision Rules in Uncertain Dichotomous Choice Situations},
  author={Nitzan, Shmuel and Paroush, Jacob},
  journal={International Economic Review},
  volume={23},
  number={2},
  pages={289--297},
  year={1982},
  publisher={Wiley}
}

@inproceedings{decision-scoring-rules,
  title={Decision Scoring Rules.},
  author={Oesterheld, Caspar and Conitzer, Vincent},
  booktitle={WINE},
  pages={468},
  year={2020}
}

@inproceedings{decision-markets,
  title={Decision rules and decision markets.},
  author={Othman, Abraham and Sandholm, Tuomas},
  booktitle={AAMAS},
  pages={625--632},
  year={2010}
}

@article{boutilier2011eliciting,
  title={Eliciting forecasts from self-interested experts: scoring rules for decision makers},
  author={Boutilier, Craig},
  journal={arXiv preprint arXiv:1106.2489},
  year={2011}
}

@inproceedings{chen2011decision,
  title={Decision markets with good incentives},
  author={Chen, Yiling and Kash, Ian and Ruberry, Mike and Shnayder, Victor},
  booktitle={International Workshop on Internet and Network Economics},
  pages={72--83},
  year={2011},
  organization={Springer}
}

@article{hanson2013shall,
  title={Shall We Vote on Values, But Bet on Beliefs?},
  author={Hanson, Robin},
  journal={Journal of Political Philosophy},
  volume={21},
  number={2},
  pages={151--178},
  year={2013}
}

@article{miller2005eliciting,
  title={Eliciting informative feedback: The peer-prediction method},
  author={Miller, Nolan and Resnick, Paul and Zeckhauser, Richard},
  journal={Management Science},
  volume={51},
  number={9},
  pages={1359--1373},
  year={2005},
  publisher={INFORMS}
}

@article{winkler2019probability,
  title={Probability Forecasts and Their Combination: A Research Perspective},
  author={Winkler, Robert L. and Grushka-Cockayne, Yael and Lichtendahl, Kenneth C. and Jose, Victor Richmond R.},
  journal={Decision Analysis},
  volume={16},
  number={4},
  pages={239--260},
  year={2019},
  publisher={INFORMS}
}

@article{ranjan2010combining,
  title={Combining probability forecasts},
  author={Ranjan, Roopesh and Gneiting, Tilmann},
  journal={Journal of the Royal Statistical Society Series B: Statistical Methodology},
  volume={72},
  number={1},
  pages={71--91},
  year={2010},
  publisher={Oxford University Press}
}

@book{green1979incentives,
  title={Incentives in public decision-making},
  author={Green, Jerry and Laffont, Jean-Jacques},
  year={1979},
  publisher={Elsevier North-Holland}
}

@article{myerson1981optimal,
  title={Optimal Auction Design},
  author={Myerson, Roger B.},
  journal={Mathematics of Operations Research},
  volume={6},
  number={1},
  pages={58--73},
  year={1981},
  publisher={INFORMS}
}

@article{mezzetti2004mechanism,
  title={Mechanism design with interdependent valuations: Efficiency},
  author={Mezzetti, Claudio},
  journal={Econometrica},
  volume={72},
  number={5},
  pages={1617--1626},
  year={2004},
  publisher={Wiley Online Library}
}

@article{nath2013strict,
  title={A strict ex-post incentive compatible mechanism for interdependent valuations},
  author={Nath, Swaprava and Zoeter, Onno},
  journal={Economics Letters},
  volume={121},
  number={2},
  pages={321--325},
  year={2013},
  publisher={Elsevier}
}

@Inproceedings{lalley2018quadratic,
  title={Quadratic voting: How mechanism design can radicalize democracy},
  author={Lalley, Steven P and Weyl, E Glen},
  booktitle={AEA Papers and Proceedings},
  volume={108},
  pages={33--37},
  year={2018},
  organization={American Economic Association 2014 Broadway, Suite 305, Nashville, TN 37203}
}

@article{Flexible-Design-for-Public-Goods,
author = {Buterin, Vitalik and Hitzig, Zo\"{e} and Weyl, E. Glen},
title = {A Flexible Design for Funding Public Goods},
year = {2019},
issue_date = {November 2019},
publisher = {INFORMS},
address = {Linthicum, MD, USA},
volume = {65},
number = {11},
issn = {0025-1909},
url = {https://doi.org/10.1287/mnsc.2019.3337},
doi = {10.1287/mnsc.2019.3337},
journal = {Manage. Sci.},
month = nov,
pages = {5171–5187},
numpages = {17}
}

@article{kagel1986winner,
  title={The winner's curse and public information in common value auctions},
  author={Kagel, John H and Levin, Dan},
  journal={The American economic review},
  pages={894--920},
  year={1986},
  publisher={JSTOR}
}

@article{gibbard1973manipulation,
  title={Manipulation of Voting Schemes: A General Result},
  author={Gibbard, Allan},
  journal={Econometrica},
  volume={41},
  number={4},
  pages={587--601},
  year={1973},
  publisher={The Econometric Society}
}


\end{document}